\newtheorem{theorem}{Theorem}
\newtheorem{lemma}[theorem]{Lemma}
\newtheorem{corollary}[theorem]{Corollary}
\newtheorem{proposition}[theorem]{Proposition}
\newtheorem{definition}[theorem]{Definition}
\theoremstyle{definition}
\newtheorem*{problem*}{Problem}
\newtheorem*{assumption*}{Assumption}
\newtheorem{example}[theorem]{Example}
\newtheorem{remark}[theorem]{Remark}
\newtheorem*{warning*}{Warning}
\crefname{theorem}{Thm.}{Thms.}
\Crefname{theorem}{Theorem}{Theorems}
\crefname{proposition}{Prop.}{Props.}
\Crefname{proposition}{Proposition}{Propositions}
\crefname{lemma}{Lem.}{Lemmas}
\Crefname{lemma}{Lemma}{Lemmas}
\crefname{corollary}{Cor.}{Cors.}
\Crefname{corollary}{Corollary}{Corollaries}
\crefname{definition}{Def.}{Defs.}
\Crefname{definition}{Definition}{Definitions}
\crefname{equation}{eq.}{eqs.}
\Crefname{equation}{Equation}{Equations}
\crefname{figure}{fig.}{figs.}
\Crefname{figure}{Figure}{Figures}
\crefname{remark}{Rem.}{Rems.}
\crefname{remark}{Remark}{Remarks}
\crefname{appendix}{Appendix}{Appendices}
\Crefname{appendix}{Appendix}{Appendices}
\newcommand{\braket}[2]{\langle #1|#2\rangle}
\newcommand{\ket}[1]{|#1\rangle}
\DeclareMathOperator{\supp}{supp}
\newcommand{\norm}[1]{\lVert #1\rVert}
\newcommand{\ox}{\otimes}
\newcommand{\mc}{\mathcal}
\newcommand{\eps}{\varepsilon}
\DeclareMathOperator{\tr}{Tr}
\DeclareMathOperator{\Tr}{Tr} 
\renewcommand{\tilde}{\widetilde}
\renewcommand{\hat}{\widehat}
\newcommand{\hide}[1]{}
\def\CC{{\mathbb C}}
\def\H{{\mc H}}
\def\M{{\mc M}}
\def\RR{{\mathbb R}}
\def\NN{{\mathbb N}}
\DeclareMathOperator{\spec}{Sp}
\newcommand{\CAR}{\textup{CAR}}
\newcommand*{\1}{\text{\usefont{U}{bbold}{m}{n}1}}
\renewcommand{\Re}{\mathrm{Re}\,}
\renewcommand{\Im}{\mathrm{Im}\,}
\newcommand{\dist}{\mathrm{dist}}
\newcommand{\ip}[2]{\left(#1,#2\right)}
\newcommand{\h}{\mathfrak h}
\newcommand{\n}{\mathfrak n}
\renewcommand{\r}{\mathfrak r}
\renewcommand{\k}{\mathfrak k}
\newcommand{\ptr}{\mathrel{\hat =}}
\title{\vspace{-2cm} Gaussian fermionic embezzlement of entanglement}
\author{Alessia Kera\,\orcidlink{0009-0000-9664-1248}, Lauritz van Luijk\,\orcidlink{0000-0003-3153-549X}, Alexander Stottmeister\,\orcidlink{0000-0002-0145-0877}, Henrik Wilming\,\orcidlink{0000-0002-0306-7679}}
\date{\small Leibniz Universit\"at Hannover, Institut f\"ur Theoretische Physik, Appelstraße 2, 30167 Hannover, Germany}
\begin{document}
\maketitle
\begin{abstract}
Embezzlement of entanglement allows to extract arbitrary entangled states from a suitable \emph{embezzling state} using only local operations while perturbing the resource state arbitrarily little. 
A natural family of embezzling states is given by ground states of non-interacting, critical fermions in one spatial dimension. 
This raises the question of whether the embezzlement operations can be restricted to Gaussian operations whenever one only wishes to extract Gaussian entangled states. 
We show that this is indeed the case and prove that the embezzling property is in fact a generic property of fermionic Gaussian states. Our results provide a fine-grained understanding of embezzlement of entanglement for fermionic Gaussian states in the finite-size regime and thereby bridge finite-size systems to abstract characterizations based on the classification of von Neumann algebras.
To prove our results, we establish novel bounds relating the distance of covariances to the trace-distance of Gaussian states, which may be of independent interest.
\end{abstract}

\tableofcontents
\section{Introduction}
Just like macroscopic bulk properties in the thermodynamic limit of many-body systems, some properties of entanglement only sharply emerge in systems with an unbounded number of degrees of freedom. 
Recent research has shown that these large-scale entanglement properties of many-body systems are closely tied to the von Neumann algebraic classification of the emergent subsystem operator algebras \cite{van_luijk_large-scale_2025, long_paper, locc}.
Of particular interest are those large-scale properties that only sharply manifest in the thermodynamic limit while still implying strong operational consequences for finite-size systems.

A paradigmatic example for such an effect is \emph{embezzlement of entanglement} \cite{van_dam2003universal,short_paper}: Here, two agents acting on disjoint subsystems $A,B$ of the total system, can extract an arbitrary bipartite entangled state $\ket{\Psi}_{A'B'} \in \CC^d\ox \CC^d$ on a system $A'B'$ from an \emph{embezzling state} $\ket{\Omega}_{AB} \in \H$ 
to any error $\eps>0$ while also perturbing the embezzling state arbitrarily little:
\begin{align}
    u_{AA'} u_{BB'} \ket{\Omega}_{AB}\ox \ket{0}_{A'}\ket{0}_{B'} \approx_\eps \ket{\Omega}_{AB}\ox\ket{\Psi}_{A'B'},\nonumber
\end{align}
where $\ket{0}_{A'}\ket{0}_{B'}$ is an arbitrary product state on $A'B'$ and $u_{AA'}, u_{BB'}$ are unitaries on the respective subsystems.

Embezzling states require infinitely many degrees of freedom. Whether they exist is determined by the operator algebras describing subsystems $A$ and $B$---in particular, by their emergent von Neumann algebraic type and subtype \cite{long_paper,short_paper}. Yet, their embezzling properties descend to finite-size versions of the same system in an approximate form \cite{van_luijk_critical_2025,luijkMultipartiteEmbezzlementEntanglement2025}.  
There are even systems where all states in the Hilbert space have the embezzling property. 
It was shown in \cite{van_luijk_critical_2025} that ground state sectors of one-dimensional, critical, fermionic models with quadratic Hamiltonians fall into this class, which is consistent with the fact that the scaling limit of such systems should give rise to conformal field theories \cite{Sato1980, vidal_entanglement_2007, Vidal2008, evenbly_real-space_2014, witteveen_quantum_2022, osborne2023cft_lattice_fermions, osborne2023ising}. 
As a consequence, also systems with gapless, chiral edge modes described by non-interacting fermions, such as the $p_x + ip_y$ superconductor \cite{read_paired_2000,ivanov_non-abelian_2001}, have embezzling ground states.
It is plausible, but unproven, that all critical systems in one spatial dimension and all systems with a chiral edge in two spatial dimensions share this property. 
So far, the abstract operator-algebraic results say little about
\begin{itemize}
    \item[(i)] the unitary operations that need to be employed to implement the embezzlement task,
    \item[(ii)] how the achievable error $\eps$ scales with the system size when considering finite-size systems (see, however, \cite{schwartzmanComplexityEntanglementEmbezzlement2025a,van_luijk_critical_2025}),
    \item[(iii)] and how generic the (approximate) embezzling property is.
\end{itemize}

We considerably advance the situation in this paper for ground states of quadratic, fermionic Hamiltonians, which are Gaussian (also called quasi-free) and hence fully described by their two-point correlation functions. Our main result is as follows:
We precisely characterize when Gaussian fermionic states can act as (approximate) embezzling states under the restricted class of Gaussian unitary operations (those generated by quadratic fermionic Hamiltonians) and give precise error estimates. 
In particular, our result characterizes \emph{universal Gaussian embezzling families}, which allow for embezzling arbitrary Gaussian entangled states using only Gaussian local operations. Our results imply that such states are generic among the pure entangled Gaussian states.
Examples comprise the ground states of the $XX$-spin chain or the transverse-field Ising model on increasing system sizes, or the above-mentioned $p_x + ip_y$ superconductors in a suitable geometry. 
We also show that these families allow to embezzle arbitrary entangled states if we allow for general (non-Gaussian) local unitary operations.

\section{Gaussian formalism}
We briefly review the formalism of fermionic Gaussian states.
In the main text, we restrict to \emph{passive} (or \emph{gauge-invariant}) Gaussian states for simplicity. These are characterized by the fact that they commute with the number operator. 
However, all our results also hold for general Gaussian states, which are, for example, necessary to describe the transverse-field Ising model or superconductors. 
These more general results are discussed in the appendix.
Consider a finite number $n$ of fermionic modes, with associated creation and annihilation operators (CAOs) $a_i^\dagger, a_j$ as well as two subsets $I,J \subset \{1,\ldots,n\}$ with elements $i_k\in I, j_k \in J$. A passive Gaussian state is a density matrix $\rho_G$ on the Fock space that fulfills
\begin{align}
    \tr(\rho_G  a_{i_{|I|}} \cdots a_{i_1} a_{j_1}^\dagger \cdots a_{j_{|J|}}^\dagger) = \delta_{|I|,|J|} \det G_{IJ}, \label{eq:wick}
\end{align}
where $G \in \CC^{n\times n}$ is a positive semidefinite matrix fulfilling $0\leq G\leq \1$ and $G_{IJ}$ is the submatrix of entries $G_{i,j}$ with $i\in I,j\in J$. We call $G$ the covariance of $\rho$. 
Pure Gaussian states are represented by projections, i.e., fulfill $G^2=G$, and correspond to Fock states (states with a definite number of fermions) relative to a suitable single-particle basis.
Any passive Gaussian state on $n$ modes has a purification in terms of a pure passive Gaussian state on a system containing $2n$ modes.

A unitary operator $u$ on the underlying single-particle Hilbert space $\CC^n$ naturally lifts to a unitary operator $\Gamma(u)$ on the Fock space acting as $\Gamma(u) a_i^\dagger \Gamma(u)^{\dagger} = \sum_{k} u_{ki} a^\dagger_k$. Such operators are called (passive) Gaussian unitaries and map (passive) Gaussian states to (passive) Gaussian states. 
They commute with the total number operator $N = \sum_i a_i^\dagger a_i$ and therefore leave the probability distribution of the total number of fermions invariant. 

\section{Gaussian embezzlement of entanglement}
Consider a bipartite system with $n_A+n_B$ modes, where the first $n_A$ modes correspond to the first subsystem and the remaining $n_B$ modes to the second subsystem. It follows from \cref{eq:wick} that Gaussian \emph{product states} have covariances that are direct sums, i.e., of the form $G=G_A\oplus G_B$, with reduced states $\rho_{G_A}$ and $\rho_{G_B}$. 
Thus $\rho_{G_A\oplus G_B} = \rho_{G_A}\ox \rho_{G_B}$.
Now consider a pure state $\rho_{K}$ on the bipartite system with $n_A+n_B$ modes and let $\rho_{G},\rho_{F}$ be pure states on a second bipartite system with $d_{A'}+d_{B'}$ modes.
For a given error $\eps>0$ we wish to characterize when it is possible to find single-particle unitary operators $u_{AA'} \in \CC^{(n_A+d_{A'})\times(n_A+d_{A'})}$ and $u_{BB'} \in \CC^{(n_B+d_{B'})\times(n_B+d_{B'})}$ such that
\begin{align}\label{eq:gaussian-mbz-ent}
	\dist\big(\Gamma(u)\rho_{K}\ox \rho_{F}\Gamma(u)^\dagger, \rho_K \ox \rho_G\big) \leq \eps,
\end{align}
where 
\begin{align}
    \dist(X,Y) = \frac{1}{2}\norm{X-Y}_1
\end{align} 
is the trace-distance and $u = u_{AA'}\oplus u_{BB'}$.
Let $K_A,G_{A'},F_{A'}$ denote the submatrices corresponding to the respective subsystems. 
Then by the monotonicity of the trace-distance under partial trace, \eqref{eq:gaussian-mbz-ent} implies
\begin{align}\label{eq:gaussian-mbz}
	\dist\big(\Gamma(u_{AA'})\rho_{K_{\!A}}\!\!\ox\!\rho_{F_{\!A'}}\Gamma(u_{AA'})^\dagger, \rho_{K_{\!A}}\!\!\ox\!\rho_{G_{\!A'}}\big)\!\leq\!\eps.
\end{align}
Conversely, we show in the appendix that if the latter equation is fulfilled, then \eqref{eq:gaussian-mbz-ent} is fulfilled with error at most $4 \eps^{1/2}$.
Thus, we can restrict to \eqref{eq:gaussian-mbz}, which describes the conversion of mixed Gaussian states using global unitary operations instead of the conversion of pure entangled states using local unitary operations. 
Dropping the labels $A$ and $A'$, for any triple of covariances $K\in \CC^{n\times n}, G,F\in \CC^{d\times d}$ we consider the function
\begin{align}
	\kappa(F,G|K) = \inf_{u} \dist\big(\Gamma(u) \rho_{K\oplus F} \Gamma(u)^\dagger, \rho_{K\oplus G}\big),
\end{align}
where  the infimum runs over the unitary group of the single-particle Hilbert space of the combined system.

\section{Main result}
We now present and discuss our main result. 
We say that a covariance $K$ has $\eps$-dense spectrum if for every $x\in[0,1]$ we have $|\lambda_j(K)-x|<\eps$ for some eigenvalue $\lambda_j(K)$ of $K$. Our main result shows that an $\eps$-dense spectrum is sufficient for approximate Gaussian embezzlement.

\begin{theorem}\label{thm:main} Let $K$ have $\eps$-dense spectrum and let $F,G$ be $d$-dimensional. Then
    \begin{align}
	    \kappa(F,G|K) &\leq 11\, d\, \eps^{1/4}. 
    \end{align} 
\end{theorem}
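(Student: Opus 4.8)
The goal is to show that if the "ancilla" covariance $K$ has $\eps$-dense spectrum, then any $d$-dimensional target covariance $G$ can be approximately reached from any source covariance $F$ by applying a Gaussian unitary to $\rho_{K\oplus F}$. The plan is to work entirely at the level of covariances: since passive Gaussian unitaries act on covariances by conjugation $G \mapsto uGu^\dagger$ on the single-particle space, the map $\rho_{K\oplus F}\mapsto \Gamma(u)\rho_{K\oplus F}\Gamma(u)^\dagger$ corresponds to $K\oplus F\mapsto u(K\oplus F)u^\dagger$. So I want to find a unitary $u$ on $\CC^{n+d}$ with $u(K\oplus F)u^\dagger$ close to $K\oplus G$ in a suitable matrix norm, and then invoke the promised bound (the "novel bounds relating the distance of covariances to the trace-distance of Gaussian states" mentioned in the abstract) to convert closeness of covariances into closeness in trace-distance. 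The factor $d$ in the bound strongly suggests that each of the $d$ "target eigenvalues" of $G$ is handled by a separate block and the errors add up linearly, while the $\eps^{1/4}$ suggests a two-step loss: $\eps \to \eps^{1/2}$ from some spectral/eigenvector approximation, then $\eps^{1/2}\to\eps^{1/4}$ from the covariance-to-trace-distance conversion (which is typically Hölder-$\tfrac12$).

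**The core construction.** Diagonalize $G = \sum_{k=1}^d \mu_k \ketbra{e_k}{e_k}$ with eigenvalues $\mu_k\in[0,1]$. For each $k$, the $\eps$-dense spectrum hypothesis gives an eigenvalue $\lambda_{j_k}(K)$ with $|\lambda_{j_k}(K)-\mu_k| < \eps$; pick the corresponding eigenvector $\ket{f_k}$ of $K$. The idea is to build $u$ so that it swaps the $k$-th target mode (carrying eigenvalue, say, the $F$-eigenvalue we don't care about) with the $K$-mode $\ket{f_k}$, for $k=1,\dots,d$, leaving everything else alone. After this swap, the new covariance on the target block is $\bigoplus_k \lambda_{j_k}(K)$ which is within $\eps$ (entrywise, hence in operator norm per block) of $G$'s diagonal form, and the new covariance on the $K$-block has had $d$ of its eigenvalues replaced by the old $F$-eigenvalues. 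The subtlety is that the $K$-block is no longer exactly $K$ — but because we only disturbed $d$ of its (infinitely many, or at least $n\gg d$ many) modes, and because $K$ itself has $\eps$-dense spectrum, we can absorb this: the perturbed $K$-block still has $\eps$-dense spectrum (removing $d$ eigenvalues from a dense set keeps it $\eps$-dense as long as enough remain near each point), or more carefully, one argues the perturbed $K$-block is close to $K$ after a further unitary rearrangement. Actually the cleanest route: choose the $d$ eigenvectors $\ket{f_k}$ of $K$ to be *distinct* (possible since the spectrum is $\eps$-dense, so there are at least order $1/\eps$ distinct eigenvalues, and we may assume $d \lesssim 1/\eps$ or else the bound $11d\eps^{1/4}$ exceeds $1$ and is vacuous), and then we need $u(K\oplus F)u^\dagger$ to agree with $K\oplus G$ up to a small-rank, small-norm perturbation, which we bound in trace norm of covariances by $\sum_k 2|\lambda_{j_k}(K)-\mu_k| < 2d\eps$ plus the mismatch on the returned modes.

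**Converting to trace-distance and the main obstacle.** Once I have $\|u(K\oplus F)u^\dagger - (K\oplus G)\|$ small (say $\lesssim d\eps$ in trace norm, or $\lesssim \eps$ in operator norm with rank $\lesssim d$), I apply the covariance-to-trace-distance estimate. For passive Gaussian states the relative entropy / trace-distance is controlled by something like $\dist(\rho_{G_1},\rho_{G_2}) \lesssim \|G_1 - G_2\|_1^{1/2}$ or a binary-entropy-type bound — this is presumably one of the lemmas the paper proves, and it is where a square-root (hence the eventual fourth-root, if the spectral step already cost a square root) enters. The main obstacle I anticipate is precisely the bookkeeping around the $K$-block: after the swaps, the reduced covariance on the $n$ ancilla modes is not $K$ but $K$ with $d$ eigenvalues overwritten by arbitrary values from $F$; I must argue this is harmless. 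I'd handle it by noting $\kappa$ should really be defined so that we may first apply a unitary making $F$'s relevant eigenvalues equal to $d$ of $K$'s eigenvalues near $0$ (again using $\eps$-density), so that the "returned" modes match existing $K$-eigenvalues up to $\eps$; then the perturbed $K$-block differs from $K$ by a rank-$\leq d$, norm-$\leq\eps$ perturbation, contributing another $\lesssim (d\eps)^{1/2}$ after conversion — still within the stated $11d\eps^{1/4}$ once one checks constants and uses $\eps\le 1$. The triangle inequality then assembles: target-block error after conversion $\lesssim (d\eps)^{1/2}\le d\eps^{1/2}$, ancilla-block error $\lesssim (d\eps)^{1/2}$, total $\lesssim d\eps^{1/2}$ — which is actually stronger than $11d\eps^{1/4}$, so the fourth root presumably comes from a weaker but more robust general-covariance version of the conversion lemma, or from needing $\dist \lesssim \|\cdot\|_\infty^{1/2}$ combined with a dimension-independent but square-root-lossy step; I would reconcile the exponent by tracking exactly which conversion inequality is available as a black box from earlier in the paper and not optimize further.
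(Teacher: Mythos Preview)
Your proposal has a genuine gap at its core. You aim to make $u(K\oplus F)u^\dagger$ close to $K\oplus G$ in the trace-norm of covariances (or as a rank-$O(d)$, norm-$O(\eps)$ perturbation), and then convert via a bound of the form $\dist(\rho_{G_1},\rho_{G_2})\lesssim\|G_1-G_2\|_1^{1/2}$. The paper explicitly proves that this route is blocked: in the appendix (``Inadequacy of the trace-distance of covariances'') it is shown that
\[
\inf_u\,\|u(K\oplus F)u^\dagger-(K\oplus G)\|_1 \;\ge\; \inf_u\,\|uFu^\dagger-G\|_1 \;=\; \|\lambda(F)^\downarrow-\lambda(G)^\downarrow\|_1,
\]
which is of order $d$ in the worst case (e.g.\ $F=\1,\,G=0$), independently of $K$. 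So no choice of $u$ makes the covariance trace-distance small. Your proposed fix---``first apply a unitary making $F$'s relevant eigenvalues equal to $d$ of $K$'s eigenvalues''---cannot work as stated, since unitaries preserve spectra; and if reinterpreted as a preliminary swap of $F$-modes with nearby $K$-modes, it merely relocates the $O(1)$ mismatch inside the $K$-block rather than removing it. After your swap the $K$-block carries the original $F$-eigenvalues at $d$ positions, each potentially $O(1)$ away from the displaced $K$-eigenvalue, so the perturbation is rank $d$ but \emph{norm} $O(1)$, not $O(\eps)$.

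What the paper actually does is replace the trace-norm by the quantity $\eta(A,B)=\|\sqrt{\1-A}\sqrt{B}-\sqrt{A}\sqrt{\1-B}\|_2$, for which it proves the two-sided bound of \cref{prop:cov-bound}. After $\delta$-regularizing $F,G$ so that all eigenvalues lie in $[\delta,1-\delta]$ (cost $O(\sqrt{d\delta})$), \cref{lemma:PS-trick} gives $\eta\le\delta^{-1/2}\|\cdot\|_2$; one then restricts $K$ to a subspace of dimension $n_\eps\le 2/\eps$ still with $\eps$-dense spectrum, and uses Wielandt's theorem to reduce to $\|(k\oplus f)^\downarrow-(k\oplus g)^\downarrow\|_2$. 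The combinatorial \cref{lemma:list-sort} then shows the sorted vectors differ in $\ell_\infty$ by at most $2d\eps$---this is where the ``catalytic'' recycling of $K$ is really happening, via the global re-sort rather than a naive swap. Balancing the $\sqrt{d\delta}$ regularization cost against the $\delta^{-1/2}\sqrt{n_\eps+d}\cdot d\eps$ term by choosing $\delta\sim\sqrt{d\eps}$ is what produces the exponent $\eps^{1/4}$; it does not arise from two successive square-root losses as you conjectured.
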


Let us discuss how to apply the result in concrete examples. Any covariance of a system with $n$ modes can have at best $1/n$-dense spectrum. 
Thus, optimal scaling of our bound is achieved if, for example, $K$ has eigenvalues $\lambda_j(K) = 1 - \frac{j}{n}$ with $j=1,\ldots,n$.
Then $\eps = 1/n$ and we find $\kappa(F,G|K) \leq 11 d n^{-1/4}$. Hence, as $n\rightarrow \infty$, any Gaussian state may be embezzled to arbitrary accuracy. 
We show in the appendix that a Gaussian state $\rho_K$ with $\eps$-dense spectrum has entropy  $H_1(\rho_K) \geq \log(2)/(4\eps)$, yielding extensive entropy. In the original bipartite setting, these entropies have to be interpreted as entanglement entropies. 
In non-interacting fermionic systems, extensive entanglement entropies of the ground states are incompatible with a local, translation-invariant Hamiltonian \cite{fannes_entropy_2003, wolf_violation_2006,gioev_entanglement_2006,farkas_sharpness_2005,farkas_von_2007}. Therefore, this family of states does not appear naturally in the context of many-body physics.

Consider now a critical model, corresponding to a scaling of Rényi entropies as
\begin{align}\label{eq:critical-scaling}
    H_\alpha(\rho_K) \approx \frac{c}{6}\Big(1+\frac{1}{\alpha}\Big)\log(n) + c_\alpha,\quad n\to \infty,
\end{align}
where $c_\alpha$ is a non-universal function of $\alpha$ \cite{holzhey_geometric_1994, calabrese2004entanglement_entropy_qft, calabrese_entanglement_2008}.
The bound $H_1(\rho_K) \geq \log(2)/(4\eps)$ now implies $\eps \geq O(1/\log(n))$ for a chain of length $2n$. 
On the other hand, the results of \cite{van_luijk_critical_2025} show that the covariance has a continuous spectrum in the thermodynamic limit. Hence, the spectrum becomes arbitrarily dense as $n\to\infty$. Thus, our main result recovers the fact that ground states of non-interacting, critical fermionic systems are embezzling states, but the convergence is very slow.

\subsection{From quasi-free to general embezzlement of entanglement}
We now show that an approximate Gaussian embezzling state can be used to embezzle arbitrary entangled states once we allow for general unitary operations instead of Gaussian unitary operations.
The argument is simple: It has been shown in \cite{leung_characteristics_2014}, that any entangled state that can embezzle arbitrary entangled states with Schmidt rank $2$ (i.e., qubit states up to local isometries) with fidelity $\mathrm{F}$ can also embezzle arbitrary pure entangled states with Schmidt rank $m$ with fidelity $\mathrm{F}_m^2 \geq 1 - (\log_2(m)^2)(1-\mathrm{F}^2)$.
It thus suffices to show that it is possible to embezzle arbitrary pure entangled states on a pair of two-level systems, which in turn means that it is sufficient to be able to embezzle arbitrary mixed qubit states on a single system. However, up to local unitary freedom, Gaussian states of a single fermionic mode precisely correspond to arbitrary mixed qubit states, which completes the argument.\footnote{If we wish to adhere to the parity superselection rule, we may encode a qubit into two fermionic modes via $\sigma_X \cong (a_1^\dagger - a_1)(a_2^\dagger + a_2)$ and $\sigma_Z \cong 2a_1^\dagger a_1 - 1$.}

\subsection{Embezzlement and number conservation}
A curious property of embezzlement is that it can arbitrarily alter the local probability distribution of the total fermion number, despite the fact that passive Gaussian unitaries preserve the latter. 
This is possible because an embezzling system must contain many more modes than the system on which it embezzles and must have a very broad distribution for the number of fermions (otherwise it cannot have $\eps$-dense spectrum for small $\eps$). 
Thus, one may shift the mean value (or any other finite number of moments) of the distribution of the number of fermions on the embezzling system by a finite amount, while keeping the probability distribution almost unaffected as measured by total variation distance, see also \cite{lipka-bartosikCatalysisQuantumInformation2024} for a discussion of this point for other forms of embezzlement.

\section{Proof of main result} We now present the proof of our main result. Proofs of the lemmata can be found in the appendix. The proof strategy is as follows: We first reduce the problem to the level of covariances and then further reduce it to the commuting case. Finally, we solve the classical problem explicitly.

In the context of Gaussian fermionic states, it is standard to bound the trace-distance between Gaussian states using distance measures on covariances. 
This allows one to reduce the problem from dimension $2^{n+d}$ to $n+d$. 
Very recently, the bound 
\begin{align}\label{eq:bittel}
	\dist(\rho_F,\rho_G) \leq 2\, \dist(F,G)
\end{align}
was established and shown to be optimal under certain circumstances \cite{bittel_optimal_2024}.
Unfortunately, \eqref{eq:bittel} is insufficient for our purposes. 
Indeed, we prove in the appendix that
\begin{align}
	\inf_u \dist\big(u(K\oplus F)u^\dagger, K\oplus G\big) \geq \inf_u \dist\big(uFu^\dagger,G\big).\nonumber 
\end{align}
The right-hand side cannot be made arbitrarily small for arbitrary $G,F$ of dimension $d$. A prototypical example is $F=\1$ (the fully occupied Fock state) and $G=0$ (the Fock vacuum), in which case the right-hand side evaluates to $d/2$.
Thus, when estimating the trace-distance of Gaussian states using the trace-distance of covariances, Gaussian embezzlement of entanglement seems to be impossible. 
We overcome this hurdle using a novel, alternative bound, which we consider to be of independent interest:
\begin{proposition}\label{prop:cov-bound}
    Let $A,B \in \CC^{n\times n}$ be covariances and let
    \begin{align}
         \eta(A,B) &:= \norm{\sqrt{\1-A}\sqrt{B} - \sqrt{A}\sqrt{\1-B}}_2.
    \end{align}
    Then
    \begin{align}
    	1 - e^{-\eta(A,B)^2/2} \leq  \dist(\rho_A,\rho_B) \leq  \sqrt{2}\, \eta(A,B).
    \end{align}
\end{proposition}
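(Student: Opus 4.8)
The plan is to exploit the standard fact that passive Gaussian states are tensor products over a diagonalizing single-particle basis: if $A$ and $B$ \emph{commute}, then $\rho_A$ and $\rho_B$ are products of commuting single-mode states, $\rho_A = \bigotimes_j \rho_{\lambda_j}$ and $\rho_B = \bigotimes_j \rho_{\mu_j}$, with $\lambda_j,\mu_j\in[0,1]$. In this case the trace distance and the quantity $\eta$ both factorize nicely: the single-mode fidelity is $\mathrm F(\rho_\lambda,\rho_\mu) = \sqrt{\lambda\mu} + \sqrt{(1-\lambda)(1-\mu)}$, so by multiplicativity of fidelity over tensor products, $\mathrm F(\rho_A,\rho_B) = \prod_j\big(\sqrt{\lambda_j\mu_j}+\sqrt{(1-\lambda_j)(1-\mu_j)}\big)$. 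Meanwhile $\eta(A,B)^2 = \sum_j\big(\sqrt{1-\lambda_j}\sqrt{\mu_j}-\sqrt{\lambda_j}\sqrt{1-\mu_j}\big)^2 = \sum_j\big(2 - 2(\sqrt{\lambda_j\mu_j}+\sqrt{(1-\lambda_j)(1-\mu_j)})\big) = \sum_j 2(1-f_j)$, where $f_j$ is the single-mode fidelity. Hence $\mathrm F(\rho_A,\rho_B) = \prod_j f_j = \prod_j(1 - \eta_j^2/2)$ with $\sum_j \eta_j^2 = \eta(A,B)^2$. The elementary inequalities $\prod_j(1-t_j)\le e^{-\sum_j t_j}$ and $\prod_j(1-t_j)\ge 1 - \sum_j t_j$ (for $t_j\in[0,1]$) then give $1 - \eta^2/2 \le \mathrm F(\rho_A,\rho_B) \le e^{-\eta^2/2}$. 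Feeding this through the Fuchs–van de Graaf inequalities $1 - \mathrm F \le \dist \le \sqrt{1-\mathrm F^2}$ yields the claimed two-sided bound in the commuting case: the lower bound $1 - e^{-\eta^2/2}\le 1-\mathrm F\le \dist$, and the upper bound $\dist \le \sqrt{1-\mathrm F^2} \le \sqrt{1-(1-\eta^2/2)^2} = \sqrt{\eta^2 - \eta^4/4}\le \eta \le \sqrt2\,\eta$.

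The remaining task is to remove the commutativity assumption. Here I would use a purification/unitary-dilation argument. First, the upper bound: both $\eta$ and $\dist$ are invariant under a common passive Gaussian unitary $u$ acting as $A\mapsto uAu^\dagger$, $B\mapsto uBu^\dagger$ (the trace distance because $\rho_{uGu^\dagger}=\Gamma(u)\rho_G\Gamma(u)^\dagger$, and $\eta$ by direct substitution since $\sqrt{1-uGu^\dagger}=u\sqrt{1-G}u^\dagger$). That alone does not diagonalize both simultaneously, so instead I would purify: any passive Gaussian state on $n$ modes has a pure passive Gaussian purification on $2n$ modes (stated in the excerpt), and one can arrange the purifications $\rho_{\tilde A}$, $\rho_{\tilde B}$ of $\rho_A,\rho_B$ so that their overlap realizes the fidelity (Uhlmann). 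The monotonicity of trace distance under partial trace gives $\dist(\rho_A,\rho_B)\le\dist(\rho_{\tilde A},\rho_{\tilde B})$, and for \emph{pure} Gaussian states $\dist = \sqrt{1-\mathrm F^2}$ exactly. So it suffices to control the purified fidelity, i.e. to show $\mathrm F(\rho_A,\rho_B)\ge 1-\eta(A,B)^2/2$ for general covariances. This I would do via the general fermionic Gaussian fidelity formula, or more robustly by an operator-convexity estimate: one has the integral/variational representation of the square-root terms, and the key algebraic identity $\|\sqrt{\1-A}\sqrt B - \sqrt A\sqrt{\1-B}\|_2^2 = \tr(A) + \tr(B) - 2\tr(A)\text{-ish cross terms}$ should reduce, after noting $\sqrt{1-A}\sqrt A$ etc.\ are genuine matrix square roots, to something comparing with $\tr\sqrt{\sqrt{\rho_A}\rho_B\sqrt{\rho_A}}$. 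For the lower bound $1-e^{-\eta^2/2}\le\dist$ in the non-commuting case, I would run the contrapositive: by the data-processing/Fuchs–van de Graaf direction $\dist\ge 1-\mathrm F$, it is enough to show $\mathrm F(\rho_A,\rho_B)\le e^{-\eta(A,B)^2/2}$, and I expect the same Gaussian fidelity formula to give $\mathrm F = \prod(\text{single-mode-like factors})$ in a joint normal form for the pair $(A,B)$ — fermionic Gaussian states admit a simultaneous block-diagonalization into $1$- and $2$-mode blocks (a Williamson/Autonne-type normal form for the pair), reducing to the commuting computation block by block.

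The main obstacle is exactly this last point: obtaining a clean closed form for $\mathrm F(\rho_A,\rho_B)$ in terms of $A,B$ when they do not commute. The fermionic Gaussian fidelity formula is known but somewhat unwieldy (it involves a matrix like $\1 - (\1-A)(\1-B) - \sqrt{\cdots}$ or a similar combination under a square root and determinant), and matching it to $e^{-\eta^2/2}$ and to $1-\eta^2/2$ on both sides requires either (a) a simultaneous normal form for the pair of covariances that turns everything into a product over small blocks — in which case the result follows from the two-mode case, which is a finite computation — or (b) a direct operator-inequality argument bounding $\log\mathrm F$ by $-\tfrac12\|\sqrt{\1-A}\sqrt B-\sqrt A\sqrt{\1-B}\|_2^2$ using operator monotonicity of the square root and Klein-type trace inequalities. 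I would try route (a) first, since the block structure makes both inequalities transparent and I would only need to verify them for $n=1$ and $n=2$; the factor of $\sqrt2$ slack in the upper bound suggests the authors deliberately left room precisely to absorb whatever is lost in passing from the block normal form back to the original covariances.
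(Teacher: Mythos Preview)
Your commuting-case computation contains a slip: one has $\eta_j^2 + f_j^2 = 1$, not $\eta_j^2 = 2(1-f_j)$ (take $\lambda=0,\mu=1$ to see the difference). With the correction, $F(\rho_A,\rho_B)^2 = \prod_j(1-\eta_j^2)$, and the rest of your commuting argument goes through as written. The real issue is the non-commuting case.

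Route (a) does not work. There is no ``Williamson/Autonne-type'' simultaneous block-diagonalization for a generic pair of positive contractions $0\le A,B\le 1$ under a single unitary conjugation $A\mapsto uAu^\dagger$, $B\mapsto uBu^\dagger$; two Hermitian matrices can be simultaneously diagonalized precisely when they commute. So the reduction to small blocks you are hoping for is unavailable, and the problem does not collapse to a finite verification. Route (b) is not fleshed out enough to assess.

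The paper's proof avoids this obstacle by working not with Uhlmann fidelity but with the \emph{transition probability} $P(\omega_S,\omega_T)=\tr(\rho_S^{1/2}\rho_T^{1/2})^2$ (the overlap of the canonical purifications in the positive cone; the paper explicitly notes this is distinct from Uhlmann fidelity). For Gaussian states Araki's formula gives $P=\det(MM^\dagger)^{1/2}$ with $M=S^{1/2}T^{1/2}+(1-S)^{1/2}(1-T)^{1/2}$, and the key algebraic identity---valid for arbitrary, non-commuting covariances---is
\[
\eta(S,T)^2 = \tr(1-MM^\dagger),
\]
proved by writing $M=v_S^\dagger v_T$ for the isometries $v_X:\psi\mapsto X^{1/2}\psi\oplus(1-X)^{1/2}\psi$. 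Then the bounds follow from the Powers--St\o rmer inequality together with $\det X\ge 1-\tr(1-X)$ (upper bound) and $\det X\le e^{-\tr(1-X)}$ (lower bound). No normal form or commutativity is needed; the $\sqrt2$ is not slack absorbed from a reduction step, but comes from the passage between the self-dual and passive covariances, $\eta(S,T)=\sqrt2\,\eta(F,G)$.
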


The proof of this result, which generalizes to non-passive (non-gauge-invariant) Gaussian states as well as infinite-dimensional single-particle Hilbert spaces, is presented in the appendix.
$\eta$ enjoys the useful property
\begin{align}\label{eq:eta_prop}
    \eta(A\oplus C, B\oplus C) = \eta(A,B).
\end{align}

We now come to the proof of \cref{thm:main}. 
In the following, it will be convenient to approximate $F,G$ by covariances $F_\delta, G_\delta$ such that
\begin{align}
\delta \1 \leq F_\delta, G_\delta \leq (1-\delta)\1
\end{align}
by replacing all eigenvalues $\lambda_j> 1-\delta$ with $1-\delta$ and all eigenvalues $\lambda_j<\delta$ with $\delta$. Here $0< \delta \leq \frac{1}{2}$ is a free parameter that will be optimized later on. 
Using the triangle-inequality and \cref{prop:cov-bound}, it follows that
\begin{align}
	\kappa(F,G|K) \leq \kappa(F_\delta,G_\delta|K) + \sqrt{2} \big(\eta(F_\delta,F) + \eta(G_\delta, G)\big).\nonumber 
\end{align}

\begin{lemma}\label{lemma:PS-trick} 
    Let $A,B\geq 0$ be covariances. If $\delta \1 \leq A\leq (1-\delta)\1$, then
    \begin{align}
        \eta(A,B)^2 \leq \frac{4}{\delta}\norm{A-B}_2^2.
    \end{align}
    If also $\delta \1 \leq B\leq(1-\delta)\1$, then
     \begin{align}
        \eta(A,B)^2 \leq \frac{1}{\delta}\norm{A-B}_2^2.
    \end{align}
\end{lemma}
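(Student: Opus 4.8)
The plan is to bound $\eta(A,B)^2 = \|\sqrt{\1-A}\sqrt B - \sqrt A\sqrt{\1-B}\|_2^2$ by a multiple of $\|A-B\|_2^2$ using the operator-Lipschitz property of the square root away from $0$. The key algebraic identity I would exploit is that the Hilbert–Schmidt quantity $\eta(A,B)^2$ can be rewritten so that the difference $A-B$ appears explicitly. Concretely, expanding the trace,
\begin{align}
\eta(A,B)^2 = \Tr\big[(\1-A)B\big] + \Tr\big[A(\1-B)\big] - 2\,\Re\Tr\big[\sqrt A\sqrt{\1-A}\,\sqrt B\sqrt{\1-B}\big],\nonumber
\end{align}
and since $\Tr[(\1-A)B]+\Tr[A(\1-B)] = \Tr A + \Tr B - 2\Tr[AB]$, one can massage this into an expression controlled by $A-B$ together with a term measuring how far $\sqrt A\sqrt{\1-A}$ is from $\sqrt B\sqrt{\1-B}$. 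The cleaner route, though, is probably to avoid the cross term entirely and instead write $\eta(A,B) = \|f(A)g(B) - g(A)f(B)\|_2$ with $f(x)=\sqrt{1-x}$, $g(x)=\sqrt x$, and split via the triangle inequality after inserting $\pm f(B)g(B)$: $\eta(A,B) \le \|(f(A)-f(B))g(B)\|_2 + \|f(B)(g(B)-g(A))\|_2 \le \|f(A)-f(B)\|_2 + \|g(A)-g(B)\|_2$, using $0\le g(B), f(B)\le \1$.

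Then I would invoke the operator-Lipschitz bound for the square root on a spectral interval bounded away from the endpoint. When $A \ge \delta\1$, the function $g(x)=\sqrt x$ on $[\delta,1]$ has $|g'| \le \tfrac{1}{2\sqrt\delta}$, but one must be careful: $B$ need not be bounded below, so the standard "both operators in the good interval" Lipschitz estimate does not immediately apply to $\|g(A)-g(B)\|_2$. Here I would use the one-sided fact that $\sqrt{\,\cdot\,}$ is operator-monotone and operator-concave together with the inequality $\|\sqrt A - \sqrt B\|_2 \le \tfrac{1}{\sqrt\delta}\,\|A-B\|_2$ valid as soon as \emph{one} of $A,B$ is $\ge \delta\1$ — this follows, e.g., by integrating the resolvent representation $\sqrt x = \tfrac{1}{\pi}\int_0^\infty \tfrac{\sqrt t}{t+x}\,\tfrac{x}{\sqrt t}\ldots$ or more simply from $(\sqrt A - \sqrt B)(\sqrt A + \sqrt B) + [\text{commutator terms}] = A - B$ and bounding $\sqrt A + \sqrt B \ge \sqrt\delta\,\1$ in the relevant sense. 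The analogous estimate for $f(x)=\sqrt{1-x}$ follows by applying the $g$-estimate to $\1-A,\1-B$, and $A\le(1-\delta)\1$ gives $\1-A\ge\delta\1$. Combining, $\eta(A,B) \le \tfrac{2}{\sqrt\delta}\|A-B\|_2$, i.e.\ the first claimed bound $\eta(A,B)^2 \le \tfrac{4}{\delta}\|A-B\|_2^2$.

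For the second, sharper bound I would use that when \emph{both} $A$ and $B$ lie in $[\delta\1,(1-\delta)\1]$ the two contributions can be combined more efficiently rather than bounded separately. The idea: go back to $\eta(A,B)^2 = \|f(A)g(B)-g(A)f(B)\|_2^2$ and note $f(A)g(B)-g(A)f(B) = f(A)g(B)-g(A)f(A)\tfrac{f(B)}{f(A)}\cdots$ — better, use that $h(x):=g(x)/f(x) = \sqrt{x/(1-x)}$ is well-defined and Lipschitz on $[\delta,1-\delta]$ with $|h'| \le \tfrac{1}{2}\delta^{-3/2}(1-\delta)^{-3/2}\le\ldots$, and write $f(A)g(B)-g(A)f(B)=f(A)\big(h(B)-h(A)\big)f(B)$ — wait, that is false since $f(A)$ and $h(A)$ need not commute, but they are functions of the same operator $A$, so $g(A)f(B) = f(A)h(A)f(B)$ and $f(A)g(B)=f(A)h(B)f(B)$ only if $f(B),h(B)$ commute, which they do. Hence $f(A)g(B)-g(A)f(B) = f(A)\big(h(B)-h(A)\big)f(B)$, so $\eta(A,B) \le \|h(A)-h(B)\|_2$ since $\|f(A)\|,\|f(B)\|\le 1$, and then operator-Lipschitz of $h$ on $[\delta,1-\delta]$ finishes it — after optimizing the constant one should land on $\eta(A,B)^2 \le \tfrac{1}{\delta}\|A-B\|_2^2$. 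The main obstacle I anticipate is getting the constants exactly as stated: the operator-Lipschitz constants of $g$, $f$, $h$ on the relevant intervals must be estimated tightly (using that the Hilbert–Schmidt norm, unlike the operator norm, makes the square root \emph{exactly} Lipschitz with the scalar constant $\sup|g'|$, via the divided-difference/Daleckii–Krein formula), and one must track carefully which sub-bound needs two-sided versus one-sided spectral control.
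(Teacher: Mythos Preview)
Your argument for the first inequality is essentially the paper's: split $\sqrt{\1-A}\sqrt B-\sqrt A\sqrt{\1-B}$ by a triangle-inequality insertion and control each piece via the one-sided square-root Lipschitz bound $\|\sqrt A-\sqrt B\|_2\le\delta^{-1/2}\|A-B\|_2$ valid as soon as $A\ge\delta\1$. The paper inserts $\pm\sqrt{\1-A}\sqrt A$ rather than your $\pm\sqrt{\1-B}\sqrt B$, but the effect is identical.

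Your route to the sharper second bound via $h(x)=\sqrt{x/(1-x)}$, however, does not deliver the constant $1/\delta$. The factorization $f(A)g(B)-g(A)f(B)=f(A)\bigl(h(B)-h(A)\bigr)f(B)$ is correct, and $\|f(A)\|,\|f(B)\|\le\sqrt{1-\delta}$, but the scalar Lipschitz constant of $h$ on $[\delta,1-\delta]$ is $\sup|h'|=\bigl(2\sqrt{1-\delta}\,\delta^{3/2}\bigr)^{-1}$, attained at $x=1-\delta$. This gives at best $\eta(A,B)\le \tfrac{\sqrt{1-\delta}}{2\delta^{3/2}}\|A-B\|_2$, hence $\eta(A,B)^2\le\tfrac{1-\delta}{4\delta^{3}}\|A-B\|_2^2$, which for small $\delta$ is far worse than $\delta^{-1}$. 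The factor-of-four improvement does not come from combining the two pieces more cleverly; it comes from the square-root Lipschitz constant itself improving by a factor of two when \emph{both} operators are bounded below: if $A,B\ge\delta\1$ then $\|\sqrt A-\sqrt B\|_2\le\tfrac{1}{2\sqrt\delta}\|A-B\|_2$ (in your own Daleckii--Krein language, $\sup_{[\delta,1]}|g'|=\tfrac{1}{2\sqrt\delta}$). Feeding this improved constant, and its analogue for $\sqrt{1-\cdot}$, back into the \emph{same} triangle-inequality splitting used for the first part gives $\eta(A,B)\le 2\cdot\tfrac{1}{2\sqrt\delta}\|A-B\|_2=\delta^{-1/2}\|A-B\|_2$, which is exactly how the paper proceeds.
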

\cref{lemma:PS-trick} implies that $\eta(F,F_\delta) \leq 2\sqrt{d\delta}$ and similarly for $G,G_\delta$. Thus 
\begin{align}
	\kappa(F,G|K) &\leq \kappa(F_\delta,G_\delta|K) + \sqrt{2}\cdot 4 \sqrt{d\delta}.
\end{align}
Using again \cref{prop:cov-bound}, it remains to bound
\begin{align}
	\kappa(F_\delta,G_\delta|K) &\leq \inf_u \sqrt{2} \eta(u(K\oplus F_\delta)u^\dagger, K\oplus G_\delta).\nonumber
\end{align}
If $K$ has $\eps$-dense spectrum, there is a subspace $P_\eps \subset \CC^n$ of dimension $n_\eps \leq \frac{2}{\eps}$ such that $K P_\eps = P_\eps K =:K_\eps$ (we identify $P_\eps$ with its orthogonal projector) and such that $K_\eps$ also has $\eps$-dense spectrum. In particular, $\lambda_j(K_\eps) - \lambda_{j+1}(K_\eps) \leq 2\eps$ and $\kappa(F,G|K)\leq \kappa(F,G|K_\eps)$. We can thus write $K=\tilde R\oplus K_\eps$.
We further denote by $K_\eps|_\delta$ the restriction of $K_\eps$ to the subspace $P_{\eps,\delta}$ of $P_\eps$ on which $\delta \1 \leq K_\eps \leq (1-\delta)\1$. 
Similarly, we restrict the unitaries $u$ in the optimization to only act non-trivially on the subspace $P_{\eps,\delta}\oplus \CC^d\subset \CC^n\oplus \CC^d$ and write $K = R\oplus K_\eps|_\delta$. Then, using \eqref{eq:eta_prop} and \cref{lemma:PS-trick},
\begin{align}
  \inf_u \eta(u(K \oplus F_\delta)u^\dagger, K \oplus G_\delta)  \nonumber  &= \inf_u \eta(u(K_\eps|_\delta \oplus F_\delta)u^\dagger, K_\eps|_\delta \oplus G_\delta) \leq \frac{1}{\delta^{1/2}} \inf_u \norm{u(K_\eps|_\delta \oplus F_\delta)u^\dagger - K_\eps|_\delta \oplus G_\delta}_2\nonumber \\
  &= \frac{1}{\delta^{1/2}}\norm{(k\oplus f)^\downarrow - (k\oplus g)^\downarrow}_2, \label{eq:hn}\\
  &\leq \frac{\sqrt{n_\eps + d}}{\delta^{1/2}}\norm{(k\oplus f)^\downarrow - (k\oplus g)^\downarrow}_\infty,
\end{align}
where $k,f,g$ denote the vectors of eigenvalues (repeated according to multiplicity) of $K_\eps|_\delta, F_\delta, G_\delta$, respectively,  and $v^\downarrow$ denotes non-increasing ordering of a vector $v\in \RR^s$, $s\in\NN$. 
Eq. \eqref{eq:hn} follows from
\begin{align}
    \inf_u \norm{u A u^\dagger- B}_p = \norm{\lambda(A)^\downarrow-\lambda(B)^\downarrow}_p,
\end{align}
which is a consequence of Wielandt's theorem (see, e.g., \cite[Theorem 9.G.1.a]{marshallInequalitiesTheoryMajorization2011}),
and the last inequality follows from $\norm{v}_2 \leq \sqrt{d}\norm{v}_\infty$ for any $d$-dimensional vector $v$.
We now make use of the following Lemma.
\begin{lemma}\label{lemma:list-sort}
     Let $\tilde k\in \RR^n$ and $g,f\in \RR^d$ be non-increasingly ordered vectors with $\tilde k_1\geq g_1,f_1$ and $\tilde k_n\leq f_d,g_d$. Then
     \begin{align}
         \norm{(\tilde k\oplus f)^\downarrow - (\tilde k\oplus g)^\downarrow}_\infty &\leq \max_{|i-j|\leq d}\, |\tilde k_i - \tilde k_j| \leq d\, \max_i (\tilde k_i - \tilde k_{i+1}).
     \end{align}
 \end{lemma}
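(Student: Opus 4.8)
The plan is to prove the two inequalities separately, the second being routine. For the second, if $|i-j|\le d$ with $i\le j$ then, since $\tilde k$ is non-increasing, $|\tilde k_i-\tilde k_j|=\sum_{\ell=i}^{j-1}(\tilde k_\ell-\tilde k_{\ell+1})\le (j-i)\max_\ell(\tilde k_\ell-\tilde k_{\ell+1})\le d\max_\ell(\tilde k_\ell-\tilde k_{\ell+1})$, which gives it at once. So the work is in the first inequality; write $C:=\max_{|i-j|\le d}|\tilde k_i-\tilde k_j|$. I would first fix a consistent tie-breaking convention in the merges (say $\tilde k$-entries precede numerically equal $f$- or $g$-entries), reducing to the generic case by a continuity argument since the hypotheses are closed conditions. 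It then suffices to bound $|(\tilde k\oplus f)^\downarrow_p-(\tilde k\oplus g)^\downarrow_p|$ by $C$ for each position $p\in\{1,\dots,n+d\}$.

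The key bookkeeping observation I would isolate is this. Suppose the top $p$ entries of the merge of $\tilde k$ with a $d$-vector $v\in\{f,g\}$ contain exactly $m$ entries of $\tilde k$; since the remaining $p-m$ come from $v$ we get $m\in\{p-d,\dots,p\}$. Moreover the $p$-th largest entry of the merge is then $\tilde k_m$ if it is a $\tilde k$-entry, and otherwise it is a $v$-entry sandwiched as $\tilde k_{m+1}\le(\cdot)\le\tilde k_m$, because $\tilde k_m$ and $\tilde k_{m+1}$ occupy an earlier, respectively a later, position in the sorted order. The extreme cases $m=0$ and $m=n$ are exactly where the hypotheses $\tilde k_1\ge f_1,g_1$ and $\tilde k_n\le f_d,g_d$ enter: if $m=0$ the $p$-th entry is the $p$-th largest $v$-entry, which is $\le v_1\le\tilde k_1$ and also $\ge\tilde k_1$ (no $\tilde k$-entry precedes it), hence equals $\tilde k_1$; dually $m=n$ pins it to $\tilde k_n$. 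Thus, writing $m_f,m_g$ for the $\tilde k$-counts among the top $p$ entries of the two merges, both lie in $\{p-d,\dots,p\}$, so $|m_f-m_g|\le d$, and each of $x:=(\tilde k\oplus f)^\downarrow_p$, $y:=(\tilde k\oplus g)^\downarrow_p$ is either a $\tilde k$-entry with a known index in $\{p-d,\dots,p\}$, or is trapped between two consecutive $\tilde k$-entries whose indices also lie in (essentially) that window.

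To finish I would do a short case split according to whether $x$ and $y$ are $\tilde k$-entries. If both are, $x=\tilde k_{m_f}$, $y=\tilde k_{m_g}$ with $|m_f-m_g|\le d$, done. If $x=\tilde k_{m_f}$ and $y$ is a $g$-entry $g_j$ with $\tilde k_{m_g+1}\le y\le\tilde k_{m_g}$, then $m_g=p-j$ with $1\le j\le d$, so $m_f,m_g,m_g+1\in\{p-d,\dots,p\}$ and $x-y\le\tilde k_{m_f}-\tilde k_{m_g+1}$, $y-x\le\tilde k_{m_g}-\tilde k_{m_f}$, each of which is $\le C$ when nonnegative and negative otherwise; the symmetric case ($y$ a $\tilde k$-entry, $x$ an $f$-entry) is identical. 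The remaining case, $x$ an $f$-entry and $y$ a $g$-entry, is the \emph{main obstacle}: one cannot compare $f$ and $g$ directly, but with $\tilde k_{m_f+1}\le x\le\tilde k_{m_f}$, $\tilde k_{m_g+1}\le y\le\tilde k_{m_g}$ and now $m_f=p-\ell$, $m_g=p-m$ with $\ell,m\in\{1,\dots,d\}$, all of $m_f,m_f+1,m_g,m_g+1$ lie in $\{p-d,\dots,p\}$, so both $\tilde k_{m_f}-\tilde k_{m_g+1}$ and $\tilde k_{m_g}-\tilde k_{m_f+1}$ are bounded by $C$ when nonnegative, yielding $|x-y|\le C$. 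Besides this case, the only delicate points are the careful treatment of the boundary counts $m\in\{0,n\}$ via the two spectral hypotheses, and of ties.
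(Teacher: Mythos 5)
Your proof is correct and follows essentially the same route as the paper's: fix a position $p$, count how many $\tilde k$-entries lie among the top $p$ of each merge (the paper phrases this via the position formulas $x=i+m(k>f_i)$ and $x=i+m(f\geq k_i)$), sandwich any $f$- or $g$-valued entry between consecutive $\tilde k$-entries, and close with the same four-way case split, with the hypotheses $\tilde k_1\geq f_1,g_1$ and $\tilde k_n\leq f_d,g_d$ handling the boundary counts exactly as the paper's conventions $\tilde k_0:=\tilde k_1$, $\tilde k_{n+1}:=\tilde k_n$ do. The only cosmetic difference is your optional continuity reduction for ties, which the paper replaces by an explicit tie-breaking convention; both work.
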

In our case, it may happen that the conditions $k_1 \geq g_1,f_1$ and $k_{n_\eps} \leq g_d,f_d$ are not fulfilled. 
In this case we necessarily have $\tilde k_1 := \max \{k_1 + \eps ,1\} > g_1,f_1$ and $\tilde k_{n_\eps + 2} := \min \{k_{n_\eps} -\eps, 0\} < g_1,f_1$. 
We can therefore consider the vector $\tilde k = (\tilde k_{1}, k_1,\ldots,k_{n_\eps}, \tilde k_{n_\eps +2 })$ that fulfills
\begin{align}
    \norm{(\tilde k\oplus f)^\downarrow - (\tilde k\oplus g)^\downarrow}_\infty = \norm{(k\oplus f)^\downarrow - (k\oplus g)^\downarrow}_\infty\nonumber
\end{align}
and $\max_i (\tilde k_i - \tilde k_{i+1}) \leq 2\eps$. 
We thus find
\begin{align}
    \inf_u \eta&(u(K_\eps \oplus F_\delta)u^\dagger, K_\eps \oplus G_\delta)  \leq 2\frac{(n_\eps +d)^{1/2}\, d\, \eps }{\delta^{1/2}}.\nonumber
\end{align}
Putting together the derived bounds, we find
\begin{align}
	\kappa(F,G|K) \leq 2\sqrt{2}\sqrt{d \delta }\left(2 + \sqrt{(n_\eps + d)d}\frac{\eps}{\delta} \right).
\end{align}
Inserting $n_\eps\leq 2/\eps$ and choosing $\delta=\frac{1}{2}\big(\frac{2d}{\eps}+d^2\big)^{1/2}   \eps$ yields (using $2\eps + \eps^2 d \leq 3 \eps d$)
\begin{align}
    \kappa(F,G|K) \leq 8\big(d^3(2\eps+\eps^2d)\big)^{1/4} \leq 11\,  d\, \eps^{1/4}.\nonumber
\end{align}

\section{Conclusion}
We have shown that Gaussian fermionic entangled states with dense Gaussian entanglement spectrum can be used to embezzle arbitrary Gaussian pure entangled states using only Gaussian local operations and general pure entangled states using general local unitary operations.

This result provides a finite-size explanation of the recent discovery that all one-dimensional critical, non-interacting fermionic systems have embezzling ground states. 
In the thermodynamic limit, the results of \cite{van_luijk_critical_2025,long_paper} show that all quasi-local excitations of the ground state share the same embezzling properties. It would be desirable to show explicitly in the setting of quasi-free systems that the $\eps$-dense spectrum of the covariance is stable against Gaussian unitary operations that act on a subsystem of fixed size, but may couple the modes of Alice with those of Bob. We leave this open for future work. 

\vspace{-0.1cm}

\paragraph{Acknowledgments}
AS and LvL have been funded by a Stay Inspired Grant of the MWK Lower Saxony (Grant ID:
15-76251-2-Stay-9/22-16583/2022)

\printbibliography
\clearpage
\appendix
\noindent {\huge \bfseries Appendix}\\

\noindent We identify $1$ with the identity operator on a Hilbert space, but write $\1$ for its matrix representation in a given basis and $\1_n$ for the $n\times n$ unit matrix.

\section{The general Gaussian formalism and the distance bound}
Here we present, discuss, and prove our main result for the general case of non-passive (non-gauge-invariant) Gaussian fermionic states. 
There are various, equivalent, but slightly different formalisms to describe these states. 
One is Araki's self-dual formalism, describing a system of $N$ fermionic modes using a \emph{complex} $2N$-dimensional Hilbert space \cite{araki1970car_bog}. The other is the Majorana formalism, describing the same system using a \emph{real} $2N$-dimensional Hilbert space, see for example \cite{bittel_optimal_2024}. While the former is more common in mathematical physics, the latter is more common in quantum information theory.
We will therefore discuss the results in both formalisms in parallel. 
In contrast to the main text, we here take an operator algebraic perspective, where a quantum state is a positive, normalized functional on an operator algebra. Since we work with fermionic systems, the relevant operator algebra is the C* algebra generated by the canonical anti-commutation relations. 
Let $\h$ be a complex single-particle Hilbert space with inner product $\ip{\cdot}{\cdot}_\h$. 
We do not assume $\h$ to be finite dimensional, unless it is explicitly stated.
The associated algebra of canonical anti-commutation relations $\CAR(\h)$ is generated by the creation and annihilation operators (CAOs) $a,a^\dagger: \h\to \CAR(\h)$ fulfilling the commutation relations
\begin{align}
  \{a^\dagger(\phi),a^\dagger(\psi)\} = 0,\quad \{a(\phi),a^\dagger(\psi)\} = \ip{\phi}{\psi}_\h,
\end{align}
with $a^\dagger$ linear and $a$ conjugate-linear.
A fermionic quantum state is a linear map $\omega : \CAR(\h) \to \CC$ such that $\omega(x^\dagger x)\geq 0$ for all $x\in \CAR(\h)$ and $\omega(1) = 1$. 
A state is called even if it vanishes on odd polynomials in the CAOs.
If $\omega_1$ is an even state on $\CAR(\h)$ and $\omega_2$ is an even state on $\CAR(\k)$, there is a unique (even) state $\omega_1\ox\omega_2$ on $\CAR(\h\oplus \k)$ such that\footnote{We identify $a^\dagger(\psi)\in \CAR(\h)$ with $a^\dagger(\psi\oplus 0)\in \CAR(\h\oplus \k)$ and similarly for $\k$.}
\begin{align}
    \omega_1\ox\omega_2(xy) = \omega_1(x)\omega_2(y),\quad x\in \CAR(\h), y\in \CAR(\k). 
\end{align}
We now introduce the two formalisms that we will use later to describe Gaussian (quasi-free) states.

\subsection{Self-dual formalism}
In Araki's self-dual formalism, $\CAR(\h)$ is represented by combining the CAOs into field operators relative to a doubled single-particle Hilbert space $\h\oplus \h$:
\begin{align}
    B(\phi_1\oplus\phi_2)\equiv B(\phi_1,\phi_2) = a^\dagger(\phi_1) + a(J\phi_2),
\end{align}
where $J$ is an antiunitary involution on $\h$, i.e. $J^2=1$ and $\ip{J\psi}{J\phi} = \ip{\phi}{\psi}$. If $\h=\CC^m$, $J$ may be chosen to be the complex conjugation relative to the standard basis.

Let $C$ be the anti-linear involution on $\h\oplus \h$ defined by $C \phi_1 \oplus \phi_2 = J\phi_2 \oplus J \phi_1$. Then $B(\phi_1,\phi_2)^\dagger = B(C \phi_1\oplus \phi_2)$ and we find
\begin{align}
    B(\phi_1\oplus\phi_2)B(\psi_1\oplus\psi_2)^\dagger + B(\psi_1\oplus \psi_2)^\dagger B(\phi_1\oplus\phi_2) = \ip{\psi_1\oplus \psi_2}{\phi_1 \oplus \phi_2}.
\end{align}
Note that the field operators simply provide a different way to represent $\CAR(\h)$. In particular,
\begin{align}
    a^\dagger(\psi) = B(\psi\oplus 0),\quad a(\phi) = B(\phi\oplus 0)^\dagger.
\end{align}
Given an algebraic state $\omega$ on $\CAR(\h)$, i.e., a positive, normalized linear map $\CAR(\h) \to \CC$, a unique  operator $S$ on $\h \oplus \h$ with $0\le S \le 1$ is defined by
\begin{align}
    \omega(B(\xi)^\dagger B(\eta)) = \ip{\xi}{S \eta}_{\h\oplus \h}. 
\end{align}
Due to the commutation relations of the field operators and since $B(\xi)^\dagger = B(C\xi)$, the operator $S$ fulfills
\begin{equation}
    S = 1 - CSC.
\end{equation}
Indeed,
\begin{align}
    \ip{\xi}{S\eta}_{\h\oplus\h} &= \ip{\xi}{\eta}_{\h\oplus \h} - \omega(B^\dagger(C\eta) B(C\xi)) = \ip{\xi}{\eta}_{\h\oplus \h} - \ip{C\eta}{SC\xi}_{\h\oplus \h} \nonumber\\
    &= \ip{\xi}{\eta}_{\h\oplus \h} - \overline{\ip{CC\eta}{CSC\xi}_{\h\oplus \h} } = \ip{\xi}{\eta}_{\h\oplus \h} - \ip{CSC\xi}{\eta}_{\h\oplus \h} \nonumber\\
    &= \ip{\xi}{\eta}_{\h\oplus \h} - \ip{\xi}{CSC\eta}_{\h\oplus \h}.\nonumber
\end{align}

\begin{definition}
    The operator $S$ is called the \emph{self-dual covariance} of the state $\omega$. 
\end{definition}

\subsection{Majorana formalism}

We now discuss the relation of the self-dual formalism with the Majorana formalism. All notation from the previous section carries over.
Let $V\subset \h\oplus \h$ be the real subspace spanned by vectors $v \in\h\oplus \h$ such that $Cv= v$. 
Since $C^2=1$, we have $\h\oplus \h = V + iV$, i.e., we can decompose any vector $\xi \in \h\oplus\h$ as $\xi = \Re \xi + i \Im \xi$, where $\Re \xi,\Im\xi \in V$, by setting $\Re \xi = \frac{1}{2}(\xi+C\xi)$ and $\Im \xi = -i \frac{1}{2}(\xi - C\xi)$.
Then the field operators associated to the real vectors $v\in V$ in the self-dual formalism fulfill $B(v)^\dagger = B(v)$ and $B(v)^2 = \frac{1}{2}\ip{v}{v}_{\h \oplus \h} =: \ip{v}{v}'_V$, 
where we defined a rescaled inner product $\ip{\cdot}{\cdot}'_V$ on the real Hilbert space $V$.
The commutation relations then take the form
\begin{align}
    \{B(v),B(w)\} = 2\ip{v}{w}'_V. 
\end{align}
These are known as the commutation relations of Majorana operators. 
Since $a^\dagger(\psi) = B(\Re(\psi\oplus 0))+ iB(\Im(\psi\oplus 0))$ the Majorana operators generate the full CAR algebra $\CAR(\h)$.
If we choose $J$ as complex conjugation in an orthonormal basis $\{\phi_i\}_{j=1}^{\dim \h}$ of $\h$, then the vectors
\begin{align}\label{eq:realbasis}
 v_{j} := \phi_j\oplus \phi_j,\quad v_{j+\dim \h} := i\phi_j \oplus (-i)\phi_j,\quad j=1,\ldots,\dim\h,
\end{align}
provide a basis of $V$ that is orthonormal with respect to $\ip{\cdot}{\cdot}'_V$. We thus find
\begin{align}
    \{ B(v_i), B(v_j) \} = 2\delta_{i,j}. 
\end{align}
Consider a self-dual covariance $S$ and define $A:= -i(2S-1)$, so that $S = \frac{1}{2}(1+iA)$. Using the relation $S = 1- CSC$, it follows that $CAC = A$. Hence, $A$ restricts to a real linear operator on $V$ such that for any $v,w\in V$ we have
\begin{align}
    \ip{v}{Aw}_V' &= \frac{1}{2}(v, A w)_{\h\oplus \h} = -\frac{i}{2} \ip{v}{(2S-1)w}_{\h\oplus \h} \\
    &= -i \omega\big(B(v) B(w) - \frac{1}{2}\{B(v),B(w)\}\big) \\
    &= -\frac{i}{2} \omega\big([B(v),B(w)]\big).
\end{align}
In particular, $A$ is anti-symmetric on $V$. 
Note that if $S$ is a projection, we find that $A^2 = -(2 S - 1)^2 = -1$.
\begin{definition}
	The real linear operator $A$ is called the \emph{Majorana covariance} of $\varphi$.
\end{definition}

\subsection{Relation to passive Gaussian states}

Now, consider a state $\omega$ on $\CAR(\h)$ such that $\omega(a^\dagger(\phi)a^\dagger(\psi)) =0$ for all $\psi,\phi\in \h$ (it follows that $\omega(a(\phi)a(\psi))=0$ as well).
This is the case for the passive (gauge-invariant) Gaussian states discussed in the main text.
Let $0\le G\le 1$ be the covariance of $\omega$ on $\h$ determined by $\omega(a(\phi)a^\dagger(\psi))=\ip{\phi}{G\psi}_\h$, $\psi,\phi\in \h$.
Then one finds by a direct calculation that
\begin{align}\label{eq:gauge-invariant}
    S = G\oplus (1-JGJ),\quad A =  \1_2 \ox 2\, \Im G+ i\sigma_z \ox (1 - 2 \,\Re G), 
\end{align}
where the real and imaginary parts of $G$ are defined relative to $J$, i.e., $\Re G = 1/2(G+JGJ)$, $\Im G = -i/2(G-JGJ)$.
Suppose the latter corresponds to complex conjugation relative to an orthonormal basis $\{\phi_j\}$ of $\h$ and consider the matrix representation $\mathbb A_{a,b} :=\ip{v_a}{A v_b}'_V = -(i/2) \omega([B(v_a),B(v_b)])$ of $A$ relative to the induced orthonormal basis $\{v_a\}$ on $V$ (cp. \eqref{eq:realbasis}) with respect to the inner product $\ip{\cdot}{\cdot}'_V$.
Similarly, consider the matrix with entries $\mathbb G_{ij} := \ip{\phi_i}{G\phi_j}$. Then we have 
\begin{align}
\mathbb A&=    
        \1_2 \ox 2\, \Im\mathbb G + i\sigma_y \ox (2\,\Re \mathbb G - \1),
\end{align}
where $\1$ denotes the unit matrix. This is indeed the standard result, see for example \cite{bittel_optimal_2024}.

\subsection{General Gaussian states and entanglement}
We now introduce general Gaussian states. 
\begin{definition}
    An even state $\omega: \CAR(\h)\to \CC$ is called \emph{Gaussian} or \emph{quasi-free} if for any $n\in \NN$ and  $w_1,\ldots,w_n \in V$ we have
    \begin{align}
        \omega\big(B(w_1) \cdots B(w_n)\big) = i^{n/2} \mathrm{Pf}\left(\left[\ip{w_i}{Aw_j}'_V\right]_{i,j=1}^n\right),
    \end{align}
    where $A$ is the Majorana covariance of $\omega$, and $\mathrm{Pf}$ denotes the Pfaffian. 
\end{definition}
Note that the Pfaffian is zero if $n$ is odd.
Of course, one can also directly express the above relation in terms of the self-dual covariance $S$, see for example \cite[Sec. 6.6]{evans1998qsym}. 

A Gaussian state is completely determined by its self-dual covariance $S$ (equivalently, its Majorana covariance $A$).  We hence write $\omega_S$ for the Gaussian state induced by $S$.  
The state is passive whenever $S = G\oplus (1 - JGJ)$, in which case the usual determinant formula stated in the main text holds. 

A Gaussian state is pure if and only if its self-dual covariance is a projection, i.e., $S^2=S$. Such projections, fulfilling $S + CSC = 1$, are also called \emph{basis projections} in the literature. 
Evidently, $S$ is a projection if and only if the Majorana covariance fulfills $A^2 = -1$. 

Consider now a bipartite splitting of the single-particle Hilbert space $\h$ as $\h_A \oplus \h_B$ and denote by $P_{A},P_{B}$ the associated projections. 
We say that the splitting is compatible with the conjugation $J$ if $J P_{A/B} = P_{A/B} J$, i.e., $J = J_A\oplus J_B$. Then $C$ also restricts to $\h_{A/B}\oplus \h_{A/B}\subset \h\oplus \h$, i.e., $\h\oplus \h \cong (\h_A\oplus \h_A) \oplus(\h_B\oplus \h_B)$, relative to which we have $C = C_A\oplus C_B$.

If $\omega_S$ is a pure Gaussian state on $\CAR(\h)$ with self-dual covariance $S$, its restriction to $\CAR(\h_A)$ is a Gaussian state with self-dual covariance $S_A$ (relative to the conjugation $C_A$ on $\h_A\oplus \h_A$) obtained by restricting $S$ to $\h_A\oplus \h_A$.
Conversely, if $\omega_{S_A}$ is a mixed Gaussian state on $\CAR(\h_A)$ with self-dual covariance $S_A$ on $\h_A\oplus \h_A$ relative to conjugation $C_A$, we can obtain a purification $\omega_S$ on $\CAR(\h_A\oplus \h_B)$ with $\h_A\cong \h_B$. 
The precise relationship between $S_A$ and $S$ will be discussed in \cref{lemma:normalform}.
In the following, we only consider the case of bipartite splittings that are compatible with the conjugation $J$. 

\subsection{Gaussian unitary operations}

We now turn to the general set of Gaussian unitary operations, also known as Bogoliubov transformations.
They are induced by unitary operators $u$ on $\h\oplus \h$ such that $CuC = u$, hence inducing an orthogonal transformation on the real subspace $V$. 
Conversely, any orthogonal transformation on $V$ induces a corresponding unitary operator $u$.
Let $P$ be the projection onto the first summand of $\h\oplus\h$ and consider the operators
\begin{align}
    B_u(\xi) := B(u\xi) = a^\dagger(P u \xi) + a(J(1-P)u\xi),\quad \xi\in \h\oplus\h. 
\end{align}
Then, since $CuC=u$, we have $B_u(C\xi) = B(uC\xi) = B(Cu\xi) = B(u\xi)^\dagger = B_u(\xi)^\dagger$ and
\begin{align}
    B_u(\xi)B_u(\eta)^\dagger + B_u(\eta)^\dagger B_u(\xi) = \ip{u\eta}{u\xi} = \ip{\eta}{\xi}.
\end{align}
Thus, the operators $B_u$ provide again self-dual fields relative to the conjugation $C$.
Hence, we can define an automorphism $\alpha_u$ of $\CAR(\h)$ by $B(\xi)\mapsto B_u(\xi)$.  

Suppose that $\omega_S$ is pure, so that we can think of it as a vector $\Omega_S$ in a Fock \rm{space} representation of $\CAR(\h)$. 
Then the automorphism $\alpha_u$ can be represented by a unitary $U$ on the Fock space if and only if $\norm{[S,u]}_2<\infty$ \cite{evans1998qsym}. This is, of course, always the case if $\h$ is finite-dimensional. 
\begin{definition}
    The automorphisms $\alpha_u$ are called Gaussian unitary operations.
\end{definition}

If $\omega$ is a Gaussian state with self-dual covariance $S$, then the state $\omega\circ \alpha_u$ is also Gaussian with self-dual covariance $S_u := u^\dagger S u$. 
Indeed,
\begin{align}
    \ip{\xi}{S_u\eta} = \omega(B_u(\xi)^\dagger B_u(\eta)) = \omega(B(u\xi)^\dagger B(u\eta)) = \ip{u\xi}{S u\eta} = \ip{\xi}{u^\dagger S u\eta}.
\end{align}
If the dimension of the $+\frac{1}{2}$ eigenspace of $S$ is either even or infinite, which is always fulfilled if $\h$ is finite-dimensional,  it follows from the relation $S + CSC=1$ that we can always find a unitary $u$ with $CuC=u$ such that 
\begin{align}
    S_u = G \oplus (1-JGJ)
\end{align}
for some $0\leq G\leq 1$ on $\h$.\footnote{A sketch is as follows: From the relation $S=1-CSC$ and the assumption on the $+1/2$ eigenspace of $S$ it follows that we can decompose $\k=\h\oplus\h$ as $\k=\k_+\oplus \k_-$ and (relative to this decomposition) $S=S_+\oplus S_-$  in such a way that $S_+\geq 1/2, S_-\leq 1/2$ and $C \k_+ = \k_-$.
It follows that $\h$ and $\k_+$ are isomorphic. Choose any unitary $v:\h\to \k_+$ and set $u = v\oplus 0 + C(v\oplus 0)C$. Then $u^\dagger S u$ takes the appropriate form.}
In other words, there is always a Gaussian unitary operation that transforms a Gaussian state into a passive Gaussian state. 
We emphasize that the resulting covariance $G$ is highly non-unique. This is illustrated by the following example.

\begin{example}\label{example:bogoliubov}
While passive Gaussian operations conserve the total number of fermions, general Gaussian unitary operations do not. Indeed, the unitary $u$ acting as $\xi_1\oplus \xi_2 \mapsto \xi_2 \oplus \xi_1$ is a valid Bogoliubov transformation, but we have
\begin{align}
 \alpha_u(a^\dagger(\psi) )  = B_u(\psi \oplus 0) = a(J\psi), 
\end{align}
so that $u$ effectively transforms creation operators into annihilation operators. If $\phi\in \h$ is normalized and real relative to $J$ we thus have that $n(\phi) = a^\dagger(\phi)a(\phi) \mapsto a(\phi)a^\dagger(\phi) = 1 - n(\phi)$. A passive gauge-invariant state with covariance $G$ is thus mapped to one with covariance $1-J GJ$. The same transformation may also be applied to $J$-invariant subspaces. This allows us to transform a projection into any other projection, thereby mapping between Fock states (pure states with fixed particle numbers) using Bogoliubov transformations. This does not violate the parity super-selection rule.
\end{example}

The passive Gaussian unitary operations, which are studied in the main text and are induced by a unitary operator $v$ on $\h$. In the self-dual formalism, they are represented on $\h\oplus \h$ by $u = v\oplus JvJ$. 
Any passive Gaussian unitary operation maps passive Gaussian states to passive Gaussian states. The example above shows, however, that there are non-passive Gaussian unitary operations that also map all passive Gaussian states to passive Gaussian states.

Now let $\omega_S$ be a Gaussian state on $\CAR(\h)$ and let $\h=\h_A\oplus \h_B$ be a bipartition compatible with the conjugation $J$ (i.e., $J = J_A\oplus J_B$ for local conjugations $J_{A/B}$). 
Then, we can find unitaries $u_A$ on $\h_A\oplus \h_A$ and $u_B$ on $\h_B\oplus \h_B$ with $C_{A/B}u_{A/B}C_{A/B} = u_{A/B}$ such that the Bogoliubov transformation corresponding to $u_A\oplus u_B$ takes the state $\omega_S$ to a state with passive marginals.

\begin{definition}
    Let $\omega$ be a state on $\CAR(\h_A\oplus \h_B)$ and $\varphi,\psi$ be states on $\CAR(\k_A\oplus \k_B)$. We say that the state-transition $\varphi\rightarrow \psi$ can be embezzled from $\omega$ up to error $\eps>0$ via \emph{local Gaussian operations} if
    \begin{align}
        \frac{1}{2}\norm{(\varphi \ox \omega)\circ \alpha_{u_A\oplus u_B} - \psi \ox \omega} \le \eps,
    \end{align}
    for unitaries $u_A$ on $(\k_A\oplus \h_A)\oplus (\k_A\oplus \h_A)$ and $u_B$ on  $(\k_B\oplus \h_B)\oplus (\k_B\oplus \h_B)$ such that $C_{A/B}u_{A/B}C_{A/B} = u_{A/B}$.
\end{definition}

\subsection{The normal form of the self-dual covariance of a bipartite pure state}
The following Lemma provides a normal form for self-dual covariances. 
\begin{lemma}\label{lemma:normalform}
    Let $\h = \h_A\oplus \h_B$ and $S$ be a basis-projection on $\h \oplus \h \cong (\h_A \oplus \h_A) \oplus (\h_B \oplus \h_B)$ relative to
    \begin{align}
        C = \begin{pmatrix} 0 & J_A\oplus J_B \\ J_A\oplus J_B & 0 \end{pmatrix} \cong  C_A \oplus C_B,
    \end{align} 
    where $C_{A/B} = \begin{pmatrix} 0 & J_{A/B} \\ J_{A/B} & 0 \end{pmatrix}$. 
    Then there exist decompositions $\h_A\oplus \h_A = \n_A \oplus \r_A, C_A = C_A|_{\n_A} \oplus C_A|_{\r_A}$ (and similarly for $B$), and a unitary $\tilde v:\r_A \to \r_B$ fulfilling $\tilde v C_A|_{\r_A}=-C_B|_{\r_B}\tilde v$ such that 
    \begin{align}
        S = Q_A \oplus \begin{pmatrix}
            \tilde S_A & \tilde S_A^{1/2} (1-\tilde S_A)^{1/2} \tilde v^\dagger\\
            \tilde v \tilde S_A^{1/2} (1-\tilde S_A)^{1/2}& \tilde v(1-\tilde S_A)\tilde v^\dagger 
        \end{pmatrix} \oplus Q_B,\label{eq:self-dual-purification}
    \end{align}
    where $Q_{A/B}$ are basis projections on $\n_{A/B}$ and $0<\tilde S_A<1$ is a self-dual covariance on $\r_A$.\footnote{Here, $0<X <1$ for an operator $X$, means that $0\le X\le 1$ and that $\ker(X)=\ker(1-X) = \{0\}$ (in infinite dimension this is not the same as $0\notin \spec(X)$).}
\end{lemma}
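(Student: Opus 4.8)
The plan is to diagonalize the "off-diagonal coupling" between the two copies of $\h_A\oplus\h_A$ and $\h_B\oplus\h_B$ inside the single-particle Hilbert space, much as one does for the reduced density matrix of a pure state, but now respecting the constraint $S+CSC=1$. Concretely, write $S$ in block form with respect to $\k=(\h_A\oplus\h_A)\oplus(\h_B\oplus\h_B)$, and let $S_A$ be the compression of $S$ to $\h_A\oplus\h_A$; it is a self-dual covariance with respect to $C_A$ (by the restriction property stated in the section on general Gaussian states), so $0\le S_A\le 1$ and $S_A=1-C_AS_AC_A$. The key observation is that, because $S$ is a projection, the operator $S_A(1-S_A)$ controls the norm of the off-diagonal block $P_AS P_B$: a routine computation with $S^2=S$ gives $P_AS P_B\,(P_AS P_B)^\dagger = S_A-S_A^2=S_A(1-S_A)$ (and likewise on the $B$ side). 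Hence the polar decomposition of $P_A S P_B$ yields a partial isometry $w:\h_B\oplus\h_B\to\h_A\oplus\h_A$ with $P_ASP_B=S_A^{1/2}(1-S_A)^{1/2}w^\dagger$ on the appropriate support, and the $BB$-block equals $w(1-S_A)w^\dagger$.

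First I would split off the kernel and cokernel of $S_A(1-S_A)$. Let $\n_A$ be the span of the $0$- and $1$-eigenspaces of $S_A$; these are exactly the directions on which $P_ASP_B$ vanishes, so on $\n_A$ the state already looks like a local basis projection $Q_A$, and similarly on the $B$ side one gets $\n_B$ and $Q_B$. On the orthogonal complements $\r_A,\r_B$ one has $0<\tilde S_A<1$, where $\tilde S_A:=S_A|_{\r_A}$, and the partial isometry $w$ restricts to a genuine unitary $\tilde v:\r_B\to\r_A$ (here I would rename to match the statement, taking $\tilde v:\r_A\to\r_B$ as the inverse). I must then check (i) that $C_A$ respects the decomposition $\h_A\oplus\h_A=\n_A\oplus\r_A$, which follows from $C_AS_AC_A=1-S_A$ (so $C_A$ maps $0$-eigenspace to $1$-eigenspace and preserves the "generic" part), and (ii) that $\tilde v$ intertwines the conjugations as $\tilde v C_A|_{\r_A}=-C_B|_{\r_B}\tilde v$. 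For (ii) I would use the relation $S=1-CSC$ together with the explicit off-diagonal form of $C$ (which swaps the two tensor factors): the $AB$-block of $CSC$ is $C_A^{\mathrm{off}}(P_BSP_A)C_B^{\mathrm{off}}$, and matching it with $1-S$ forces the sign. Plugging everything back assembles $S$ into the stated block matrix on $\r_A\oplus\r_B$, and the direct sum with $Q_A\oplus Q_B$ completes the normal form.

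The main obstacle I expect is bookkeeping the antilinear structure: one must be careful that $\tilde S_A$ is itself a \emph{self-dual} covariance on $\r_A$ (i.e. $\tilde S_A=1-C_A|_{\r_A}\tilde S_A C_A|_{\r_A}$), which requires that $\r_A$ is $C_A$-invariant and not merely $S_A$-invariant — this is where the hypothesis on the dimension/parity of the $+\tfrac12$-eigenspace of $S$ quietly enters, exactly as in the footnoted sketch of bringing a covariance to passive form. The intertwining relation $\tilde v C_A=-C_B\tilde v$ (with the crucial minus sign) is the other delicate point; it is forced, not chosen, and the cleanest derivation is to verify $S^2=S$ and $S+CSC=1$ directly on the candidate block matrix \eqref{eq:self-dual-purification} and read off the necessary constraints on $\tilde v$. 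Everything else — the polar decomposition, the spectral splitting into $\n$ and $\r$, and the identification of $Q_{A/B}$ as basis projections — is standard operator theory and poses no real difficulty.
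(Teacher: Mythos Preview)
Your approach is essentially the paper's: block-decompose $S$, use $S^2=S$ to identify the off-diagonal block via polar decomposition, split off $\ker S_A(1-S_A)$, and read the intertwining of $\tilde v$ with the conjugations from $S+CSC=1$. Two points deserve correction.

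First, no parity hypothesis on the $+\tfrac12$-eigenspace is needed here, and the paper's proof does not use one. The $C_A$-invariance of $\r_A$ follows immediately from $C_AS_AC_A=1-S_A$: this implies $C_A\,S_A(1-S_A)\,C_A=(1-S_A)S_A=S_A(1-S_A)$, so $C_A$ commutes with $S_A(1-S_A)$ and preserves both its support $\r_A$ and its kernel $\n_A$. You have conflated this lemma with the (separate, and genuinely parity-dependent) footnoted argument for bringing a covariance into passive form. Second, your claim that ``the $BB$-block equals $w(1-S_A)w^\dagger$'' does not follow from the polar decomposition of $P_ASP_B$ alone. From $S^2=S$ you get $S_{BA}S_{AB}=S_B(1-S_B)$, which only pins down $S_B(1-S_B)$, not $S_B$. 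The paper uses the off-diagonal equation $S_AS_{AB}+S_{AB}S_B=S_{AB}$ (equivalently $S_{AB}S_B=(1-S_A)S_{AB}$) to conclude $\tilde S_B=\tilde v(1-\tilde S_A)\tilde v^\dagger$ on $\r_B$; you should make this step explicit rather than verifying consistency \emph{a posteriori}. The derivation of the sign in $\tilde v C_A=-C_B\tilde v$ is likewise direct: $S+CSC=1$ gives $C_AS_{AB}C_B=-S_{AB}$, and since $C_A$ commutes with $S_A(1-S_A)$ this forces the sign on the partial isometry.
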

\begin{proof}
    Write $S = \begin{pmatrix} S_A & S_{AB} \\ S_{BA} & S_B\end{pmatrix}$ with $S_{AB} = S_{BA}^\dagger$. 
     We now set $\r_A = \supp(S_A(1-S_A))$ and $\n_A = \ker(S_A(1-S_A))$
    as well as $\tilde S_A = S_A|_{\r_A}$ and $Q_A = \tilde S_A|_{\n_A}$. Clearly $\spec(Q_A)\subset\{0,1\}$, $\tilde S_A(1-\tilde S_A) = (S_A(1-S_A))|_{\r_A}$. Similar relations hold for $B$.
    
    From $S^2=S$ we find $|S_{BA}|^2 = S_{BA}^\dagger S_{BA}  = S_A(1-S_A)$ and $S_A S_{AB} + S_{AB}S_B=S_{AB}$ and similarly with $A,B$ exchanged. From the first relation and the polar decomposition there exists a partial isometry $v:\h_A\oplus \h_A \to \h_B\oplus \h_B$ such that $S_{BA} = v |S_{BA}| = v S_A^{1/2} (1-S_A)^{1/2}$ and such that $v^\dagger v$ is the projection onto $\r_A$. Since $S_{AB}=S_{BA}^\dagger$, we also find $v S_A(1-S_A) = S_B(1-S_B)v$ and $v v^\dagger$ is the projection onto $\r_B$. In particular $S_{AB} = v^\dagger S_B^{1/2} (1-S_B)^{1/2}$ restricts to an invertible map $\tilde S_{AB}:\r_B \to \r_A$ and vanishes on $\n_B$. 
     
    From the second relation above, we find $v S_{AB} S_B = v (1-S_A) S_{AB}$.
    Hence $vv^\dagger S_B^{1/2} (1-S_B)^{1/2} S_B vv^\dagger  = v(1-S_A)v^\dagger S_B^{1/2}(1-S_B)^{1/2}$. Dividing by $S_B^{1/2}(1-S_B)^{1/2}$ on its support $\r_B$ hence yields
    \begin{align}
      vv^\dagger S_B vv^\dagger  = v(1-S_A)v^\dagger. 
    \end{align}
    The partial isometry $v$ thus restricts to a unitary $v:\r_A\to \r_B$ such that $\tilde S_B = \tilde v(1-\tilde S_A)\tilde v^\dagger$ and $\tilde S_{AB} = \tilde S_A^{1/2}(1-\tilde S_A)^{1/2} \tilde v^\dagger$. 
        
    Finally, from $S + CSC = 1$ we find $C_A S_{AB} C_B = - S_{AB}$, which implies $C_A v C_B = -v$, and $C_A S_A C_A = (1-S_A)$ (on $\h_A\oplus \h_A$). The remaining claims then follow immediately.
\end{proof}
The part of $S_B$ on $\n_B$ corresponds to a pure state. The Lemma shows that the entanglement is completely determined by $\tilde S_A$, corresponding to the faithful part of the reduced state on $\CAR(\h_A)$. 
Moreover, by a Gaussian unitary operation, we can bring $\tilde S_A$ into the form $G_A\oplus (1-J_AG_AJ_A)$ with $G_A$ invertible, showing that the entanglement is completely equivalent to that of a passive, faithful Gaussian state with correlation operator $G_A$. 

\begin{corollary}\label{cor:normalform}
    Let $\omega_S$ be a pure Gaussian state on $\CAR(\h_A\oplus \h_B)$ with faithful marginals on $\CAR(\h_{A/B})$. Then there exists  a unitary $u :\h_A\oplus \h_A\to \h_B\oplus \h_B$ such that
    \begin{align}
        S = \begin{pmatrix}
            S_A & S_A^{1/2}u^\dagger S_B^{1/2}  \\
            S_B^{1/2}u S_A^{1/2} & S_B,
        \end{pmatrix} = (1 \oplus u) \begin{pmatrix}
            S_A & S_A^{1/2}(1-S_A)^{1/2} \\
            S_A^{1/2}(1-S_A)^{1/2} & 1 - S_A
        \end{pmatrix}(1 \oplus u^\dagger),
    \end{align} 
    and $S_B = u(1-S_A)u^\dagger$. 
\end{corollary}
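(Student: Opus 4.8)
The plan is to obtain \cref{cor:normalform} as the special case of \cref{lemma:normalform} in which the $\n$-parts are trivial. First I would argue that the faithfulness hypothesis forces $\n_A=\n_B=\{0\}$. The marginal $\omega_{S_A}$ on $\CAR(\h_A)$ is the Gaussian state with self-dual covariance $S_A$, which obeys $S_A + C_A S_A C_A = 1$; in finite dimensions its density matrix has full rank precisely when $0<S_A<1$, i.e.\ when $\ker(S_A)=\ker(1-S_A)=\{0\}$, and in general one takes this as the definition of faithfulness, noting that $C_A S_A C_A = 1-S_A$ gives $\ker(S_A)=C_A\ker(1-S_A)$ so the two kernel conditions are equivalent. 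Since $S_A$ and $1-S_A$ commute, $\n_A=\ker\big(S_A(1-S_A)\big)=\ker(S_A)\oplus\ker(1-S_A)=\{0\}$ and hence $\r_A=\h_A\oplus\h_A$; the same reasoning applies on the $B$ side.

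With $\n_A=\n_B=\{0\}$ the decomposition \eqref{eq:self-dual-purification} of \cref{lemma:normalform} has no $Q_A$- or $Q_B$-summand, one has $\tilde S_A=S_A$ and $\tilde S_B=S_B$, and $\tilde v$ is a genuine unitary $\h_A\oplus\h_A\to\h_B\oplus\h_B$, which we rename $u$ (it may be taken to satisfy $uC_A=-C_Bu$, though this is not needed for the statement). Substituting into \eqref{eq:self-dual-purification} then gives
\begin{align}
    S = \begin{pmatrix}
        S_A & S_A^{1/2}(1-S_A)^{1/2}\,u^\dagger\\
        u\,S_A^{1/2}(1-S_A)^{1/2} & u(1-S_A)u^\dagger
    \end{pmatrix},\nonumber
\end{align}
so in particular $S_B=u(1-S_A)u^\dagger$.

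Finally I would rewrite the off-diagonal blocks in the stated form. Since $u$ is unitary, the functional calculus yields $S_B^{1/2}=u(1-S_A)^{1/2}u^\dagger$, whence
\begin{align}
    S_A^{1/2}u^\dagger S_B^{1/2} = S_A^{1/2}u^\dagger u(1-S_A)^{1/2}u^\dagger = S_A^{1/2}(1-S_A)^{1/2}u^\dagger,\nonumber
\end{align}
and likewise $S_B^{1/2}u S_A^{1/2}=u S_A^{1/2}(1-S_A)^{1/2}$; this establishes the first displayed equality of the corollary. Pulling out the block-diagonal unitary $1\oplus u$ on the left and $1\oplus u^\dagger$ on the right gives the second. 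I do not expect a real obstacle here: the only nonroutine point is the identification of faithfulness of the marginal with $\ker(S_A)=\ker(1-S_A)=\{0\}$, after which everything is a direct specialization of \cref{lemma:normalform} combined with the unitarity of $u$.
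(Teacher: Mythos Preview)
Your proposal is correct and follows precisely the approach the paper intends: \cref{cor:normalform} is stated without proof immediately after \cref{lemma:normalform}, and your argument supplies exactly the expected specialization---faithfulness of the marginals forces $\n_A=\n_B=\{0\}$, after which \eqref{eq:self-dual-purification} collapses to the claimed form, and the rewriting of the off-diagonal blocks via $S_B^{1/2}=u(1-S_A)^{1/2}u^\dagger$ is routine. The one point you flag as nonroutine (faithfulness of $\omega_{S_A}$ $\Leftrightarrow$ $\ker S_A=\ker(1-S_A)=\{0\}$) is indeed a standard fact about quasi-free states and is consistent with how the paper uses ``$0<\tilde S_A<1$'' in \cref{lemma:normalform}.
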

\begin{remark}\label{remark:normalform-passive}
    If $\omega_S$ is a passive Gaussian state, so that $S = G \oplus J(1-G)J$ (in the ordering $(\h_A\oplus \h_B) \oplus (\h_A\oplus \h_B)$), it follows that there exists a unitary $u: \h_A\to \h_B$ such that
    \begin{align}
        S = \begin{pmatrix}
            G_A& G_A^{1/2}(1-G_A)^{1/2} u^\dagger & &\\
            u G_A^{1/2}(1-G_A)^{1/2}  & u(1-G_A) u^\dagger&& \\
            && J_A(1-G_A)J_A  &-J_A G_A^{1/2}(1-G_A)^{1/2} u^\dagger J_B \\
            && - J_B u G_A^{1/2}(1-G_A)^{1/2} & J_B u G_A u^\dagger J_B
        \end{pmatrix}.
    \end{align}
    In particular $G_B = u(1-G_A)u^\dagger$. Note that we can write $S = \big[(1\oplus u)P_{G_A}(1\oplus u)^\dagger \big]\oplus \big[(1\oplus J_B u J_A)(1-P_{J_A G_A J_A})(1\oplus J_B u J_A)^\dagger\big]$. 
    Since passive Bogoliubov transformations are of the form $ \tilde u\oplus J\tilde u J$ with $\tilde u$ a unitary on $\h$, this general form is of course preserved under passive Gaussian operations. 
\end{remark}

\subsection{Distance bounds}\label{sec:eta}
For positive contractions $S,T$ on a Hilbert space, define
\begin{align}\label{eq:eta}
	\eta(S,T) := \norm{S^{1/2} (\1-T)^{1/2} - (\1-S)^{1/2} T^{1/2}}_2. 
\end{align}
If $S,T$ are projections, we have $\eta(S,T) = \norm{S-T}_2$. 
If we write $S = \frac{1}{2}(1 + iA),$ $T = \frac{1}{2}(1 + iB)$ in terms of Majorana covariances $A,B$, then 
	\begin{align}
	\eta(S,T) = \frac{1}{2}\norm{(1+iA)^{1/2}(1-iB)^{1/2} - (1 - iA)^{1/2}(1+iB)^{1/2}}_2.
	\end{align}
A passive Gaussian state is completely determined by the covariance $G$. If $S,T$ are self-dual covariances corresponding to passive Gaussian states with covariances $F,G$ we have, using \eqref{eq:gauge-invariant}, 
\begin{align}
	\eta(S,T) = \sqrt{2}\, \eta(F,G). 
\end{align}

\begin{proposition}\label{app:prop:cov-bound}
    Let $\omega_S,\omega_T$ be Gaussian states on $\CAR(\h)$ such that $\norm{S^{1/2}-T^{1/2}}_2<\infty$. Then
    	\begin{align}
		1 - \exp(-\frac{1}{4}\eta(S,T)^2) \leq 	\frac{1}{2}\norm{\omega_S - \omega_T} \leq  \eta(S,T). 
	\end{align}
	In particular, if $\omega_S,\omega_T$ are passive (gauge-invariant) with covariances $F,G$, then
	\begin{align}
		1 - \exp(-\frac{1}{2}\eta(F,G)^2)	\leq 	\frac{1}{2}\norm{\omega_S - \omega_T} \leq \sqrt{2}\, \eta(F,G). 
	\end{align}
\end{proposition}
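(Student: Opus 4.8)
The plan is to pass from the trace-distance to the fidelity (Uhlmann transition probability) $F(\omega_S,\omega_T)$ of the two states by means of the Fuchs--van de Graaf inequalities
\begin{align}
    1-F(\omega_S,\omega_T)\ \le\ \tfrac12\norm{\omega_S-\omega_T}\ \le\ \sqrt{1-F(\omega_S,\omega_T)^2},\nonumber
\end{align}
and then to compute $F$ explicitly from the covariances. It therefore suffices to prove the two estimates
\begin{align}
    F(\omega_S,\omega_T)\ \le\ \exp\!\big(-\tfrac14\,\eta(S,T)^2\big)\qquad\text{and}\qquad F(\omega_S,\omega_T)^2\ \ge\ 1-\eta(S,T)^2,\nonumber
\end{align}
since the first gives $\tfrac12\norm{\omega_S-\omega_T}\ge 1-F\ge 1-\exp(-\tfrac14\eta(S,T)^2)$ and the second gives $\tfrac12\norm{\omega_S-\omega_T}\le\sqrt{1-F^2}\le\eta(S,T)$ (the last bound being trivial once $\eta(S,T)\ge 1$).

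The computation of the fidelity is the heart of the matter. Introduce the operators
\begin{align}
    M:=S^{1/2}T^{1/2}+(\1-S)^{1/2}(\1-T)^{1/2},\qquad N:=S^{1/2}(\1-T)^{1/2}-(\1-S)^{1/2}T^{1/2},\nonumber
\end{align}
so that $\eta(S,T)=\norm{N}_2$. A direct computation, using only $[S^{1/2},(\1-S)^{1/2}]=0=[T^{1/2},(\1-T)^{1/2}]$, gives the identity $M^\dagger M+N^\dagger N=\1$; in particular $N^\dagger N\le\1$, so the singular values $\sigma_j$ of $N$ lie in $[0,1]$. Moreover $(\1-S)^{1/2}=CS^{1/2}C$ (from $\1-S=CSC$ and the antiunitarity of $C$), whence $\norm{N}_2\le 2\norm{S^{1/2}-T^{1/2}}_2<\infty$ by hypothesis, so $N^\dagger N$ is trace class and the Fredholm determinant $\det(\1-N^\dagger N)=\det(M^\dagger M)$ is well defined. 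The identity I would then establish is
\begin{align}
    F(\omega_S,\omega_T)\ =\ \det(\1-N^\dagger N)^{1/4}\ =\ \prod_j(1-\sigma_j^2)^{1/4},\qquad\text{so that}\qquad \eta(S,T)^2=\sum_j\sigma_j^2.\nonumber
\end{align}
To prove this I would purify $\omega_S$ and $\omega_T$ by pure Gaussian states on a common doubled single-particle space, using \cref{lemma:normalform} to put the two purifying basis projections into a suitably aligned normal form, and then evaluate the pure-state overlap, which is a determinant in the two projections; the remaining point is to verify that this Gaussian purification realizes the Uhlmann maximum. The normalisation (a fourth root, not a square root) is fixed by the passive single-mode case, where the formula must reduce to $F(\rho_F,\rho_G)=\sqrt{FG}+\sqrt{(1-F)(1-G)}$. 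Alternatively one may import the known closed form for the fidelity of fermionic Gaussian states and re-express it in the self-dual covariances.

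Granting this identity, both estimates become elementary spectral inequalities. Since $-\log(1-x)\ge x$ on $[0,1)$,
\begin{align}
    -\log F(\omega_S,\omega_T)=-\tfrac14\sum_j\log(1-\sigma_j^2)\ \ge\ \tfrac14\sum_j\sigma_j^2\ =\ \tfrac14\,\eta(S,T)^2,\nonumber
\end{align}
which is the first estimate; and since $\prod_j(1-x_j)\ge 1-\sum_j x_j$ for $x_j\in[0,1]$,
\begin{align}
    F(\omega_S,\omega_T)^2=\prod_j(1-\sigma_j^2)^{1/2}\ \ge\ \prod_j(1-\sigma_j^2)\ \ge\ 1-\sum_j\sigma_j^2\ =\ 1-\eta(S,T)^2,\nonumber
\end{align}
which is the second. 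For the passive specialisation one uses the relation $\eta(S,T)=\sqrt2\,\eta(F,G)$ for gauge-invariant covariances recorded in \cref{sec:eta}: substituting $\eta(S,T)^2=2\,\eta(F,G)^2$ into the two displayed inequalities yields the bounds with exponent $\tfrac12$ and prefactor $\sqrt2$. The only genuine obstacle I foresee is the fidelity identity itself---verifying that the canonical Gaussian purification is Uhlmann-optimal, and making the determinant rigorous in infinite dimensions, where $F(\omega_S,\omega_T)$ may vanish and the $\sigma_j$ may accumulate at $1$ while still $\sum_j\sigma_j^2<\infty$; everything downstream of it is routine.
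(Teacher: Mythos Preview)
Your strategy is essentially the paper's: both reduce to the quantity $\det(M^\dagger M)$ (equivalently $\det(\1-N^\dagger N)$ via your identity $M^\dagger M+N^\dagger N=\1$, which is correct and is a clean way to see $\tr(\1-M^\dagger M)=\eta(S,T)^2$) and then apply the two elementary determinant inequalities you wrote down. The substantive difference is \emph{which} state-similarity measure you plug into the sandwich inequality.

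The paper uses the \emph{transition probability} $P(\omega_S,\omega_T)=\braket{\Omega_S}{\Omega_T}^2$ (overlap of the canonical purifications in the positive cone) together with the Powers--St{\o}rmer inequalities $1-\sqrt P\le\tfrac12\norm{\omega_S-\omega_T}\le\sqrt{1-P}$, and then invokes Araki's known formula $P=\det(MM^\dagger)^{1/2}$. You instead use the Uhlmann fidelity $F$ with Fuchs--van de Graaf and aim for the identity $F(\omega_S,\omega_T)=\det(M^\dagger M)^{1/4}$. But $\det(M^\dagger M)^{1/4}=\sqrt{P}$ by Araki, and in general $F\ge\sqrt P$ with strict inequality as soon as the two Gaussian states fail to commute---which they typically do once the covariances $S,T$ are not simultaneously diagonalisable. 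Your single-mode sanity check cannot detect this, since all even single-mode states commute. So the step you flagged as the ``only genuine obstacle'' (that the Gaussian purification realizes the Uhlmann maximum) is not a technicality to be verified but a false statement.

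Concretely: your upper bound survives because $1-F^2\le 1-P\le\eta^2$, but your lower bound breaks. You need $F\le\exp(-\tfrac14\eta^2)$, yet your determinant computation only yields $\sqrt P\le\exp(-\tfrac14\eta^2)$, and $F\ge\sqrt P$ points the wrong way. The fix is exactly what the paper does: replace $F$ by $\sqrt P$ and Fuchs--van de Graaf by Powers--St{\o}rmer (which is the stronger inequality on the lower side), and cite Araki's formula for $P$ rather than attempting to rederive an Uhlmann-fidelity identity. After that substitution everything else you wrote goes through verbatim.
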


We have here expressed the result in terms of the norm-distance between functionals, which translates into the trace-distance bound stated in the main text when we represent the states as density matrices on Fock space.

\begin{lemma}\label{lemma:cov-bound-trace-distance}
    Let $0\leq S,T\leq 1$. Then 
    \begin{align}
    \eta(S,T)^2 \leq 2\norm{S-T}_1.
    \end{align}    
    Hence $\norm{\omega_S-\omega_T} \leq 2^{3/2}\sqrt{\norm{S-T}_1}$.
\end{lemma}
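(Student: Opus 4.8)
The plan is to rewrite $\eta(S,T)$ as a sum of two squared Hilbert--Schmidt norms, estimate each of them by $\norm{S-T}_1$, and then quote \cref{app:prop:cov-bound} for the statement about $\omega_S,\omega_T$. Since the claim is trivial when $\norm{S-T}_1=\infty$, I would assume throughout that $S-T$ is trace class; all products appearing below then turn out to be trace class with the stated bounds, so cyclicity of the trace is legitimate even in infinite dimensions.

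The first step is to establish the identity
\[
\eta(S,T)^2 = \norm{S-T}_2^2 + \norm{M_S - M_T}_2^2,\qquad M_X := \sqrt{X}\sqrt{\1-X} = \big(X(\1-X)\big)^{1/2}\ge 0 .
\]
This follows by expanding $\norm{\sqrt{S}\sqrt{\1-T}-\sqrt{\1-S}\sqrt{T}}_2^2$ into four trace terms: using that $S$ commutes with $\1-S$ and $T$ with $\1-T$, the two cross terms are both equal to $\tr[M_S M_T]$, while the two ``diagonal'' terms give $\tr[(\1-T)S]+\tr[(\1-S)T]$, and a direct expansion of both sides shows $\tr[(\1-T)S]+\tr[(\1-S)T] = \norm{S-T}_2^2 + \tr[M_S^2]+\tr[M_T^2]$; reassembling and using that $M_S,M_T$ are self-adjoint (so $\tr[M_SM_T]\in\RR$) yields the displayed formula. (The scalar instance $s+t-2st-2\sqrt{s(1-s)t(1-t)} = (s-t)^2 + \big(\sqrt{s(1-s)}-\sqrt{t(1-t)}\big)^2$ is a quick sanity check.)

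Next I would bound the two summands. Since $0\le S,T\le\1$, the self-adjoint operator $S-T$ has spectrum in $[-1,1]$, hence $(S-T)^2\le\abs{S-T}$ and $\norm{S-T}_2^2\le\norm{S-T}_1$. For the other summand, $M_S$ and $M_T$ are the positive square roots of the positive operators $S(\1-S)$ and $T(\1-T)$, so the Powers--Størmer inequality $\norm{A^{1/2}-B^{1/2}}_2^2\le\norm{A-B}_1$ gives $\norm{M_S-M_T}_2^2\le\norm{S(\1-S)-T(\1-T)}_1$. It then remains to show $\norm{S(\1-S)-T(\1-T)}_1\le\norm{S-T}_1$: writing $X(\1-X)=\tfrac14\1-(X-\tfrac12\1)^2$ and setting $\tilde S:=S-\tfrac12\1$, $\tilde T:=T-\tfrac12\1$, the left-hand side is $\norm{\tilde T^2-\tilde S^2}_1$, and the anticommutator identity $\tilde T^2-\tilde S^2=\tfrac12\{\tilde T-\tilde S,\tilde T+\tilde S\}$ together with the triangle inequality and Hölder ($\norm{XY}_1\le\norm{X}_1\norm{Y}_\infty$) gives $\norm{\tilde T^2-\tilde S^2}_1\le\norm{\tilde T-\tilde S}_1\norm{\tilde T+\tilde S}_\infty = \norm{S-T}_1\norm{S+T-\1}_\infty\le\norm{S-T}_1$, the last step because $-\1\le S+T-\1\le\1$. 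Chaining the three inequalities yields $\eta(S,T)^2\le 2\norm{S-T}_1$, and then \cref{app:prop:cov-bound} ($\tfrac12\norm{\omega_S-\omega_T}\le\eta(S,T)$) gives $\norm{\omega_S-\omega_T}\le 2\eta(S,T)\le 2^{3/2}\norm{S-T}_1^{1/2}$.

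I expect the only genuinely delicate point to be the identity in the first step: the algebraic bookkeeping in the expansion and, if one insists on full generality, checking that the operators being cycled under the trace lie in the appropriate Schatten classes (this is unproblematic once $S-T$ is trace class, since then $\tilde T^2-\tilde S^2$ and $M_S-M_T$ are trace class by the Hölder bound of the third step and Powers--Størmer, respectively). Everything after the identity is Powers--Størmer plus an elementary Hölder estimate.
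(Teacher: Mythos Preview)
Your proof is correct and follows a genuinely different route from the paper's. The paper adds and subtracts $\sqrt{\1-S}\sqrt{S}=\sqrt{S}\sqrt{\1-S}$ to split $\sqrt{\1-S}\sqrt{T}-\sqrt{S}\sqrt{\1-T}$ into $\sqrt{\1-S}(\sqrt{T}-\sqrt{S})$ and $-\sqrt{S}(\sqrt{\1-T}-\sqrt{\1-S})$, and then applies H\"older together with Powers--St\o rmer $\norm{A^{1/2}-B^{1/2}}_2^2\le\norm{A-B}_1$ to each piece. You instead establish the exact identity $\eta(S,T)^2=\norm{S-T}_2^2+\norm{M_S-M_T}_2^2$ (equivalently $\eta(S,T)^2=\tfrac12\norm{P_S-P_T}_2^2$, via the block form of $P_S-P_T$; cf.\ \cref{lemma:eta-purification}) and bound each summand by $\norm{S-T}_1$: the first because $\norm{X}_2^2\le\norm{X}_\infty\norm{X}_1$, the second via Powers--St\o rmer applied to $S(\1-S),T(\1-T)$ combined with the clean H\"older/anticommutator estimate $\norm{S(\1-S)-T(\1-T)}_1\le\norm{S+T-\1}_\infty\norm{S-T}_1\le\norm{S-T}_1$. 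Your route yields an identity of independent interest and delivers the constant $2$ without relying on the sign of the cross term in the paper's decomposition (which, as one checks in one dimension at $s=\tfrac14,t=\tfrac34$, goes the wrong way; taken literally, the triangle inequality in the paper's argument only secures $\eta^2\le4\norm{S-T}_1$). The infinite-dimensional caveat you flag is real---your four-term expansion passes through $\tr[S(\1-T)]$ and the like, which need not be individually finite---but is easily bypassed by deriving the identity from $2\eta(S,T)^2=\norm{P_S-P_T}_2^2$ and reading off the block diagonal of $(P_S-P_T)^2$, where every quantity is manifestly finite once $S-T$ is trace class.
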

\begin{proof}[Proof of \cref{lemma:cov-bound-trace-distance}]
    We have
    \begin{align}
        \eta(S,T)^2&=  \norm{\sqrt{1-S}\sqrt{T}-\sqrt{S}\sqrt{1-T}}_2^2 = \norm{\sqrt{1-S}(\sqrt{T}-\sqrt{S}) - \sqrt{S}(\sqrt{1-T}-\sqrt{1-S})}_2^2\\
	&\leq \norm{1-S}\norm{\sqrt{T}-\sqrt{S}}_2^2 + \norm{S}\norm{\sqrt{1-T}-\sqrt{1-S}}_2^2 \\
	&\leq 2 \norm{S-T}_1,
    \end{align}
    where we used the triangle-inequality, $\norm{AB}^2_2 \leq \norm{A^\dagger A} \norm{B}_2^2$ and $\norm{S^{1/2} - T^{1/2}}_2^2 \leq \norm{S-T}_1$  \cite{powers1970free_states}.
\end{proof}
If $0\leq S\leq 1$ is an operator on a Hilbert space $\k$, we consider the following projection on $\k \oplus \k$
\begin{align}\label{eq:cov_proj}
    P_S := \begin{pmatrix}
        S & S^{1/2}(1-S)^{1/2}\\
        S^{1/2} (1-S)^{1/2} & 1-S
    \end{pmatrix} = v_S v_S^\dagger,
\end{align}
where $v_S: \k\to \k\oplus \k$ is the isometry acting as $\psi \mapsto S^{1/2}\psi \oplus (1-S)^{1/2} \psi$. If $w_S:\k\to \k\oplus \k$ is the isometry acting as $\psi \mapsto (-(1-S)^{1/2})\psi\oplus S^{1/2}\psi$, we have $w_S w_S^\dagger = 1 - P_S$ and 
\begin{align}
    \eta(S,T)^2 = \norm{v_S^\dagger w_T}_2^2. 
\end{align}
\begin{lemma}\label{lemma:eta-purification}
    Let $0\leq S,T\leq 1$. Then 
    \begin{align}
        \eta(P_S,P_T)^2 = 2\, \eta(S,T)^2. 
    \end{align}
    \begin{proof}
    Using the cyclicity of the trace, we have
        \begin{align}
            \eta(S,T)^2 = \norm{v_S^\dagger w_T}_2^2 = \tr(v_S v_S^\dagger w_T w_T^\dagger) = \tr(P_S(1-P_T)) = \tr(P_T(1-P_S)),
        \end{align}
        where the last equality follows from the symmetry of the left-hand side under $S\leftrightarrow T$.
    Since $P_S(1-P_T) + P_T(1-P_S) = (P_S-P_T)^2$, the claim follows.
    \end{proof}
\end{lemma}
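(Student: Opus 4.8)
The plan is to leverage the fact that $P_S$ and $P_T$ are honest orthogonal projections, so that $\eta$ degenerates to the Hilbert–Schmidt distance, and then to expand $\norm{P_S-P_T}_2^2$ and recognize the two cross-terms as two copies of $\eta(S,T)^2$. Concretely, since $P_S = v_S v_S^\dagger$ with $v_S$ an isometry (and similarly $1-P_S = w_S w_S^\dagger$, $1-P_T = w_T w_T^\dagger$ by the definitions in \eqref{eq:cov_proj}), we have $P_S^2 = P_S$ and $P_T^2 = P_T$, hence $\eta(P_S,P_T) = \norm{P_S-P_T}_2$.

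First I would record the operator identity $(P_S-P_T)^2 = P_S(1-P_T) + P_T(1-P_S)$, which holds whenever $P_S$ and $P_T$ are idempotent: the squares $P_S^2,P_T^2$ collapse to $P_S,P_T$, leaving the mixed terms $-P_SP_T-P_TP_S$ on both sides. Taking traces gives $\eta(P_S,P_T)^2 = \tr\big(P_S(1-P_T)\big) + \tr\big(P_T(1-P_S)\big)$. Next I would identify $\tr\big(P_S(1-P_T)\big)$ with $\eta(S,T)^2$: using $1-P_T = w_Tw_T^\dagger$ and cyclicity of the trace, $\tr\big(P_S(1-P_T)\big) = \tr\big(v_Sv_S^\dagger w_T w_T^\dagger\big) = \norm{v_S^\dagger w_T}_2^2 = \eta(S,T)^2$, the last equality being exactly the identity recorded just above the lemma. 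Finally, since $\eta$ is symmetric in its arguments — the Hilbert–Schmidt norm defining $\eta(T,S)$ is that of the adjoint, up to a sign, of the one defining $\eta(S,T)$ — the same computation with $S$ and $T$ swapped gives $\tr\big(P_T(1-P_S)\big) = \eta(S,T)^2$, and adding the two terms yields $\eta(P_S,P_T)^2 = 2\,\eta(S,T)^2$.

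The only genuinely delicate point is the operator identity $(P_S-P_T)^2 = P_S(1-P_T)+P_T(1-P_S)$: it relies essentially on $P_S,P_T$ being projections and is false for general positive contractions, so one must use that the $P_\bullet$ of \eqref{eq:cov_proj} really are projections (which follows from $v_\bullet$ being an isometry). Beyond that there is only a harmless bookkeeping issue in infinite dimensions: all operators involved are positive, the traces take values in $[0,\infty]$, and finiteness holds exactly when $\eta(S,T)<\infty$, so no further care is needed.
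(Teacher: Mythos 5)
Your proof is correct and follows essentially the same route as the paper's: both reduce $\eta(P_S,P_T)^2$ to $\tr\big((P_S-P_T)^2\big)$ via the projection property, use the identity $(P_S-P_T)^2 = P_S(1-P_T)+P_T(1-P_S)$, and identify each trace with $\norm{v_S^\dagger w_T}_2^2 = \eta(S,T)^2$ by cyclicity and the symmetry of $\eta$. The only cosmetic difference is that the paper infers $\tr\big(P_S(1-P_T)\big)=\tr\big(P_T(1-P_S)\big)$ directly from the $S\leftrightarrow T$ symmetry of the trace expression, whereas you evaluate both terms separately; the content is identical.
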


\begin{remark}
    In \cite{bittel_optimal_2024} the bound
    \begin{align}
        \norm{\omega_S - \omega_T} \leq \frac{1}{2}\norm{\mathbb A - \mathbb B}_1 = \frac{1}{2}\norm{A-B}_1 = \norm{S-T}_1
    \end{align}
    was derived in terms of the trace-norm of the (matrix representations of the) respective Majorana covariances $A$ and $B$.  If we have passive Gaussian states, so that $S = G\oplus (1-JGJ)$ and $T=F\oplus (1-JFJ)$, we  find
    \begin{align}
        \norm{\omega_S-\omega_T} \leq 2 \norm{G-F}_1,
    \end{align}
    i.e., \eqref{eq:bittel} in the main text.
    The derived bound in \cref{lemma:cov-bound-trace-distance} is  weaker than the above bound, but our application to embezzlement shows that the bound in \cref{app:prop:cov-bound} can be significantly tighter. 
\end{remark}

\begin{proof}[Proof of \cref{app:prop:cov-bound}.] To set up the proof, we briefly recall some facts from the theory of von Neumann algebras.
To any quasi-free state $\omega_S$ on $\CAR(\h)$ we can associate a von Neumann algebra $\M_S := \pi_S(\CAR(\h))''$ via the GNS representation $\pi_S$ of $\omega_S$. 
If $\norm{\sqrt S- \sqrt T}_2 < \infty$, the states $\omega_S,\omega_T$ are \emph{quasi-equivalent} \cite{araki1970car_bog}.
This implies that we can find a representation of $\M_S$ on a Hilbert space $\H_S$ (its standard representation), such that there are normalized vectors $\Omega_S,\Omega_T \in \H_S$ in the associated \emph{positive cone} of $\H_S$ which realize the two states and fulfill $\braket{\Omega_S}{\Omega_T}\geq 0$ \cite{haagerup_standard_1975}.
Then the \emph{transition-probability} $P(\omega_S,\omega_T)$ is defined as $P(\omega_S,\omega_T) := \braket{\Omega_S}{\Omega_T}^2$. 
For two density operators $\rho,\sigma$ on a finite-dimensional Hilbert space, the vectors associated vectors $\Omega_\rho,\Omega_\sigma$ simply correspond to canonical purifications and we have $P(\rho,\sigma) = \tr(\rho^{1/2}\sigma^{1/2})^2$.
Note that the transition probability is distinct from Uhlmann's fidelity $\mathrm{F}(\rho,\sigma) = \norm{\rho^{1/2}\sigma^{1/2}}_1$.

The Powers-St\o rmer inequality \cite{powers1970free_states} implies
\begin{align}\label{eq:PS}
    1 - \sqrt{P(\omega_S,\omega_T)} \leq \frac{1}{2}\norm{\omega_S-\omega_T} \leq \sqrt{1-P(\omega_S,\omega_T)}.
\end{align}
Araki showed \cite{araki1970car_bog} (see also \cite{matsuiKakutaniDichotomyFree2012}) that the transition probability takes a particularly simple form for Gaussian states:\footnote{In infinite dimensions, if $0\leq A \leq 1$, the determinant is defined via $\det(A):=\inf_V\det (A|_V)$ where the infimum is taken over all finite-dimensional subspaces, see \cite{powers1970free_states} for details. 
}
\begin{align}\label{eq:araki}
    P(\omega_S,\omega_T) = \det(MM^\dagger)^{1/2},
\end{align}
where $M = v_S^\dagger v_T = S^{1/2}T^{1/2} + (1-S)^{1/2} (1-T)^{1/2}$. 
In particular, we have
\begin{align}
    0\leq MM^\dagger = v_S^\dagger v_T v_T^\dagger v_S = v_S^\dagger P_T v_S \leq 1
\end{align}
and hence $\det(MM^\dagger)\leq 1$.
Using Araki's formula \eqref{eq:araki} together with the Powers-St\o rmer inequality \eqref{eq:PS} yields
\begin{align}
      1 - \det(MM^\dagger)^{1/4} \leq \frac{1}{2}\norm{\omega_S-\omega_T} \leq \sqrt{1-\det(MM^\dagger)^{1/2}}.
\end{align}

{\bf Upper bound:} For any finite collection of numbers $\lambda_i \in [0,1]$ it is true that\footnote{Proof by induction on the number of elements $n$: For $n=1$ it is trivial. 
Assume it is true for $n$, then
$\prod_{i=1}^{n+1}(1-\lambda_i)=(\prod_{i=1}^n (1-\lambda_i))(1-\lambda_{n+1}) \geq(1-\sum_{i=1}^n \lambda_i)(1-\lambda_{n+1}) = 1 + \lambda_{n+1}\sum_{i=1}^n\lambda_i - \sum_{i=1}^{n+1} \lambda_i \geq 1 - \sum_{i=1}^{n+1}\lambda_i$.
}
\begin{align}
	\prod_i (1- \lambda_i) \geq 1-\sum_i \lambda_i.
\end{align}

As a consequence we have $\det(1-X) \geq 1- \tr(X)$ for any $0\leq X \leq 1$.
Thus $\det(X) = \det(1-(1-X)) \geq 1 - \tr(1-X)$ and hence $1 - \det(X) \leq \tr(1-X)$.  
Consequently
\begin{align}
	1 - \det(MM^\dagger)^{1/2} &\leq 1-\det(MM^\dagger) \leq \tr(1-MM^\dagger)  \\
	&= \tr(v_S^\dagger(1-P_T)v_S)\\
	&= \tr(v_S^\dagger w_T w_T^\dagger v_S)\\
	&= \norm{v_S^\dagger w_T}_2^2
    =\eta(S,T)^2.
\end{align}

{\bf Lower bound: } Using $\det(X) = \exp(\tr(\log(X)))$ and $\log(1+x)\leq x$ we find
\begin{align}
	1 - \det(MM^\dagger)^{1/4} &= 1 - \exp\left(\frac{1}{4}\tr\Big(\log\big(1-(1-MM^\dagger)\big)\Big)\right) \geq 1 - \exp\big(-\frac{1}{4}\tr(1-MM^\dagger)\big)\\
    &= 1 - \exp(-\frac{1}{4}\eta(S,T)^2),
\end{align}
using again that $\Tr(1-MM^\dagger) = \eta(S,T)^2$.
\end{proof}

Finally, let us note that there is an alternative expression for $\eta(S,T)$ in terms the Hilbert-Schmidt distance of unitaries $U_{S},U_{T}$ on $\k\oplus\k\cong\k\ox\CC^{2}$ satisfying 
\begin{align}
\label{eq:cov_unitary}
P_{S} & = U_{S}(1\ox\tfrac{1}{2}(\1_{2}+\sigma_{z}))U_{S}^{\dagger},
\end{align}
and analogously for $T$ (cp.~\eqref{eq:cov_proj}). Explicitly, they take the form
\begin{align}\label{eq:cov_unitary_form}
    U_{S} & := \begin{pmatrix} v_S& w_S\end{pmatrix} =  \begin{pmatrix}
        S^{1/2} & -(1-S)^{1/2}\\
        (1-S)^{1/2} & S^{1/2}
    \end{pmatrix} = S^{1/2}\ox\1_{2}-i(1-S)^{1/2}\ox\sigma_{y},
\end{align}
with an analogous expression for $U_{T}$.
\begin{lemma}\label{lem:eta_unitary}
Let $0\leq S,T\leq 1$. Then
\begin{align}\label{eq:eta_unitary}
    \eta(S,T) & = \tfrac{1}{2}\norm{U_{S}^{2}-U_{T}^{2}}_{2}.
\end{align}
\end{lemma}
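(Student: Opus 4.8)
The plan is to verify \eqref{eq:eta_unitary} by direct computation, exploiting the explicit form of the $U_S$ given in \eqref{eq:cov_unitary_form}. First I would square $U_S$: since $U_S = S^{1/2}\ox\1_2 - i(1-S)^{1/2}\ox\sigma_y$ and $\sigma_y^2 = \1_2$, while $\{\1_2,\sigma_y\}$-type cross terms combine, one computes
\begin{align}
    U_S^2 &= \big(S^{1/2}\ox\1_2 - i(1-S)^{1/2}\ox\sigma_y\big)^2 \nonumber\\
    &= (2S-1)\ox\1_2 - 2i\, S^{1/2}(1-S)^{1/2}\ox\sigma_y,\nonumber
\end{align}
using that $S^{1/2}$ and $(1-S)^{1/2}$ commute. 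The key point is that $U_S^2$ depends on $S$ only through the commuting pair $(2S-1)$ and $S^{1/2}(1-S)^{1/2}$, so the $S$-dependent and $T$-dependent pieces subtract cleanly.

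Next I would form the difference and take the Hilbert–Schmidt norm. Writing $U_S^2 - U_T^2 = 2(S-T)\ox\1_2 - 2i\big(S^{1/2}(1-S)^{1/2} - T^{1/2}(1-T)^{1/2}\big)\ox\sigma_y$, and using that $\1_2$ and $\sigma_y$ are orthogonal in the Hilbert–Schmidt inner product on $\CC^{2\times 2}$ with $\norm{\1_2}_2^2 = \norm{\sigma_y}_2^2 = 2$, the cross term vanishes and
\begin{align}
    \tfrac14\norm{U_S^2 - U_T^2}_2^2 = 2\norm{S-T}_2^2 + 2\norm{S^{1/2}(1-S)^{1/2} - T^{1/2}(1-T)^{1/2}}_2^2.\nonumber
\end{align}
It then remains to check that the right-hand side equals $\eta(S,T)^2$. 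The cleaner route here is to recall from \eqref{eq:cov_proj} and the surrounding discussion that $U_S = (v_S\ \ w_S)$ is the unitary block-diagonalizing $P_S$, and that $U_S^2 = U_S (1\ox\sigma_z) U_S^\dagger \cdot$(something)—more precisely $U_S^2 = U_S(1\ox\sigma_z)U_S^{-1}$ would be the reflection $2P_S - 1$ tensored appropriately; indeed one recognizes $U_S^2 = (2P_S - 1)$ after identifying $\k\oplus\k$ correctly, so that $\tfrac14\norm{U_S^2-U_T^2}_2^2 = \norm{P_S - P_T}_2^2 = \eta(P_S,P_T)^2 = 2\eta(S,T)^2$ by \cref{lemma:eta-purification}. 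Either way one lands on $\eta(S,T)^2$.

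The main obstacle is bookkeeping rather than conceptual: one must be careful about the identification $\k\oplus\k\cong\k\ox\CC^2$ and the placement of tensor factors so that $U_S^2$ is correctly read as an operator on $\k\oplus\k$ (and matches $2P_S-1$ up to the chosen isomorphism), and one must confirm the two candidate evaluations of $\tfrac14\norm{U_S^2-U_T^2}_2^2$ agree—equivalently, that $2\norm{S-T}_2^2 + 2\norm{S^{1/2}(1-S)^{1/2}-T^{1/2}(1-T)^{1/2}}_2^2 = 2\eta(S,T)^2$, which is exactly the content of expanding $(P_S-P_T)^2$ in blocks. I would therefore present the proof via the $P_S$ route: show $U_S^2$ implements $2P_S-1$, invoke $\norm{2P_S - 1 - (2P_T-1)}_2 = 2\norm{P_S-P_T}_2$, and close using \cref{lemma:eta-purification}, which gives $\eta(S,T) = \tfrac12\norm{U_S^2-U_T^2}_2$ immediately.
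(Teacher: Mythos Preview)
Your central claim that $U_S^2 = 2P_S - 1$ is incorrect and cannot be rescued by any re-identification of $\k\oplus\k$: the reflection $2P_S-1$ is always self-adjoint, whereas your own formula $U_S^2 = (2S-1)\otimes\1_2 - 2i\,S^{1/2}(1-S)^{1/2}\otimes\sigma_y$ is not (unless $S$ happens to be a projection). What does hold is $U_S^2(1\otimes\sigma_z) = 2P_S - 1$, as one checks from $\sigma_y\sigma_z = i\sigma_x$. Since $1\otimes\sigma_z$ is a fixed unitary this still gives $\tfrac12\norm{U_S^2 - U_T^2}_2 = \norm{P_S - P_T}_2$, and then \cref{lemma:eta-purification} (together with $\eta(P,Q)=\norm{P-Q}_2$ for projections) yields $\tfrac12\norm{U_S^2 - U_T^2}_2 = \sqrt 2\,\eta(S,T)$ rather than $\eta(S,T)$. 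Your direct-computation route lands in the same place: expanding the definition of $\eta$ shows $\norm{S-T}_2^2 + \norm{S^{1/2}(1-S)^{1/2} - T^{1/2}(1-T)^{1/2}}_2^2 = \eta(S,T)^2$, so the quantity $2\norm{S-T}_2^2 + 2\norm{\cdots}_2^2$ you displayed equals $2\eta(S,T)^2$, not $\eta(S,T)^2$. A one-line sanity check is $\k=\CC$, $S=1$, $T=0$: then $\eta(S,T)=1$ but $\tfrac12\norm{U_1^2-U_0^2}_2 = \tfrac12\norm{2\,\1_2}_2 = \sqrt 2$.

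For comparison, the paper's argument takes a different, more algebraic route: it introduces the complex structure $I(\xi_1\oplus\xi_2) = (-\xi_2)\oplus\xi_1$ and uses the identities $U_{1-S}I = IU_{1-S} = -U_S^\dagger$ to rewrite $\Tr(1 - M^\dagger M)$ (already known to equal $\eta(S,T)^2$) in terms of $\Tr\big(2 - (U_T^\dagger)^2 U_S^2 - (U_S^\dagger)^2 U_T^2\big) = \norm{U_S^2 - U_T^2}_2^2$. Your $P_S$-based approach is more geometric and, once the correct relation $U_S^2(1\otimes\sigma_z)=2P_S-1$ is inserted, arguably cleaner; carried through carefully, both routes deliver the same overall constant, which suggests the stated factor $\tfrac12$ in the lemma should in fact read $\tfrac{1}{2\sqrt 2}$.
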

\begin{proof}
 Using the definition of the operator $M$ (given below \eqref{eq:araki}), \eqref{eq:cov_unitary_form} implies
\begin{align}
    \Tr(1+U_{S}U_{1-T}^{2}U_{S}) & = 2\Tr(1-M^{\dagger}M).
\end{align}
Now, we consider the complex structure $I$ on $\k\oplus\k$, defined by $I(\xi_{1}\oplus\xi_{2})=(-\xi_{2})\oplus\xi_{1}$, and infer from \eqref{eq:cov_unitary_form} the relations $U_{1-S}I = -U_{S}^{\dagger}$ and $IU_{1-S} = -U_{S}^{\dagger}$. This leads to
\begin{align} \nonumber
    \Tr(1-M^{\dagger}M) & = \tfrac{1}{2}\Tr(1+U_{S}U_{1-T}^{2}U_{S}) = \tfrac{1}{2}\Tr(1+U_{S}U_{1-T}I(-I)U_{1-T}U_{S}) \\ & = \tfrac{1}{2}\Tr(1-U_{S}U_{T}^{\dagger}U_{T}^{\dagger}U_{S}) = \tfrac{1}{2}\Tr(1-(U_{T}^{\dagger})^{2}(U_{S})^{2}).
\end{align}
By symmetry in $S,T$, we obtain
\begin{align} \nonumber
    \Tr(1-M^{\dagger}M) & = \tfrac{1}{4}\Tr(2-(U_{T}^{\dagger})^{2}(U_{S})^{2}-(U_{S}^{\dagger})^{2}(U_{T})^{2}) = \tfrac{1}{4}\norm{U_{S}^{2}-U_{T}^{2}}_{2}^{2},
\end{align}
which completes the proof.
\end{proof}

\subsection{Reduction to the monopartite case}
We will now show that the distance of two local self-dual covariances $S_A,T_A$ can be used to bound the distance of purifications up to local Bogoliubov transformations. 
This reduces the problem of Gaussian embezzlement of entanglement to a problem on simple (\emph{monopartite}) quantum systems described by mixed Gaussian states. 

\begin{lemma}\label{lemma:mono-to-bipartite-quasifree}
   Let $S,T$ be basis projections on $(\h_A\oplus \h_B)\oplus(\h_A\oplus \h_B) \cong (\h_A\oplus \h_A)\oplus(\h_B\oplus \h_B)$ that induce faithful Gaussian states on $\CAR(\h_A)$ and $\CAR(\h_B)$, respectively.   If there exists a Bogoliubov transformation $\CAR(\h_A)$ induced by a unitary $u_A$ on $\h_A\oplus \h_A$ such that
   \begin{align}
       \eta(S_A, u_A T_A u_A^\dagger) \leq \eps,
   \end{align}
   then there exists a Bogoliubov transformation on $\CAR(\h_B)$ such that
   \begin{align}
       \eta\big(S, (u_A\oplus u_B) T (u_A \oplus u_B)^\dagger\big) \leq \sqrt{2}\, \eps.
   \end{align}
\end{lemma}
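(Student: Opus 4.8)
The plan is to build the unitary $u_B$ on $\h_B \oplus \h_B$ explicitly from the normal-form data of Lemma~\ref{lemma:normalform} and Corollary~\ref{cor:normalform}, and then to reduce the bipartite $\eta$-distance to the monopartite one using the identity $\eta(P_X, P_Y)^2 = 2\,\eta(X,Y)^2$ from Lemma~\ref{lemma:eta-purification}. Since $S, T$ are basis projections inducing faithful marginals, Corollary~\ref{cor:normalform} gives unitaries $w, w' : \h_A \oplus \h_A \to \h_B \oplus \h_B$ with
\begin{align}
S = (1\oplus w)\, P_{S_A}\, (1 \oplus w)^\dagger, \qquad T = (1\oplus w')\, P_{T_A}\, (1 \oplus w')^\dagger, \nonumber
\end{align}
where $P_X$ is the canonical purification projection of \eqref{eq:cov_proj}. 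The idea is to choose $u_B$ so that conjugating $T$ by $u_A \oplus u_B$ replaces $P_{T_A}$ by $P_{u_A T_A u_A^\dagger}$ and simultaneously matches the ``$w$-frame'' of $S$; concretely, set $u_B := w\, u_A\, (w')^\dagger$, which is unitary on $\h_B \oplus \h_B$ (and one must check it commutes with $C_B$, using $C_A w C_B = -w$, $C_A w' C_B = -w'$ from the proof of Lemma~\ref{lemma:normalform}, so the two signs cancel and $C_B u_B C_B = u_B$ follows from $C_A u_A C_A = u_A$).

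With this choice, $(u_A \oplus u_B)(1 \oplus w') = (1 \oplus w)(u_A \oplus u_A)$, hence
\begin{align}
(u_A \oplus u_B)\, T\, (u_A \oplus u_B)^\dagger = (1 \oplus w)\,(u_A \oplus u_A) P_{T_A} (u_A \oplus u_A)^\dagger\, (1 \oplus w)^\dagger = (1 \oplus w)\, P_{u_A T_A u_A^\dagger}\, (1 \oplus w)^\dagger, \nonumber
\end{align}
where the last step uses that conjugating the purification projection $P_X$ by $u_A \oplus u_A$ yields $P_{u_A X u_A^\dagger}$ — this is immediate from the block form \eqref{eq:cov_proj} since $(u_A \oplus u_A) v_X = v_{u_A X u_A^\dagger} u_A$. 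Therefore
\begin{align}
\eta\big(S,\, (u_A \oplus u_B) T (u_A\oplus u_B)^\dagger\big) = \eta\big((1\oplus w) P_{S_A} (1 \oplus w)^\dagger,\ (1\oplus w) P_{u_A T_A u_A^\dagger} (1\oplus w)^\dagger\big) = \eta\big(P_{S_A},\, P_{u_A T_A u_A^\dagger}\big), \nonumber
\end{align}
using that $\eta$ is unitarily invariant (it is defined through a Hilbert–Schmidt norm of an expression built from $S^{1/2}, (1-S)^{1/2}$, which is covariant under simultaneous conjugation — or alternatively use Lemma~\ref{lem:eta_unitary} together with unitary invariance of $\|\cdot\|_2$). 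Finally Lemma~\ref{lemma:eta-purification} gives $\eta(P_{S_A}, P_{u_A T_A u_A^\dagger}) = \sqrt{2}\,\eta(S_A, u_A T_A u_A^\dagger) \leq \sqrt{2}\,\eps$, which is the claim.

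I expect the main obstacle to be bookkeeping rather than anything deep: one must be careful that Corollary~\ref{cor:normalform} really applies on the nose (the marginals are faithful by hypothesis, so the $\n_{A/B}$ parts are absent and $S_A, T_A$ are genuine faithful self-dual covariances), that the $w$ appearing for $S$ and the $w'$ appearing for $T$ can be taken with respect to the \emph{same} decomposition $(\h_A\oplus\h_A)\oplus(\h_B\oplus\h_B)$ so that $u_B$ is well defined, and — most delicately — that $u_B = w u_A (w')^\dagger$ satisfies $C_B u_B C_B = u_B$ so that it induces a legitimate Bogoliubov transformation; this last point is where the sign relations $C_A w C_B = -w$ and $C_A w' C_B = -w'$ from Lemma~\ref{lemma:normalform} are essential, as the two minus signs must cancel. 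A secondary subtlety is that in infinite dimensions one should confirm $\|S^{1/2} - ((u_A\oplus u_B)T(u_A\oplus u_B)^\dagger)^{1/2}\|_2 < \infty$ so that $\eta$ is finite and Proposition~\ref{app:prop:cov-bound} remains applicable downstream, but this follows from the hypothesis $\eta(S_A, u_A T_A u_A^\dagger) \le \eps < \infty$ via the chain of identities above.
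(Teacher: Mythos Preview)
Your proof is correct and follows essentially the same route as the paper's: write $S=(1\oplus w)P_{S_A}(1\oplus w)^\dagger$ and $T=(1\oplus w')P_{T_A}(1\oplus w')^\dagger$ via Corollary~\ref{cor:normalform}, set $u_B=w\,u_A\,(w')^\dagger$, and reduce to Lemma~\ref{lemma:eta-purification} by unitary invariance of $\eta$. You in fact go slightly further than the paper by explicitly checking that $u_B$ intertwines $C_B$ (so that it defines a genuine Bogoliubov transformation); note only that the correct relation from Lemma~\ref{lemma:normalform} is $C_B\,w = -\,w\,C_A$ (equivalently $C_B\,w\,C_A=-w$), not $C_A\,w\,C_B=-w$ as you wrote, but your sign-cancellation argument goes through unchanged with the corrected ordering.
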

\begin{proof}
    By \cref{cor:normalform} we can write $S = (1\oplus \tilde u_S) P_{S_A} (1\oplus \tilde u_S)^\dagger$, $T = (1\oplus \tilde u_T) P_{T_A} (1\oplus \tilde u_T)^\dagger$, where $\tilde u_S,\tilde u_T:\h_A\oplus \h_A \to \h_B\oplus \h_B$ are unitary. 
    Define $u_B = \tilde u_S u_A \tilde u_T^\dagger$. Then 
    \begin{align}
        (u_A\oplus u_B)T(u_A\oplus u_B)^\dagger = (1\oplus \tilde u_S)   (u_A\oplus u_A) P_{T_A}   (u_A\oplus u_A)^\dagger (1\oplus \tilde u_S)^\dagger = (1\oplus \tilde u_S) P_{u_A T_A u_A^\dagger} (1\oplus \tilde u_S)^\dagger.
    \end{align}
    Hence by \cref{lemma:eta-purification}  and unitary invariance of $\eta$ we have
    \begin{align}
        \eta\big(S, (u_A\oplus u_B) T (u_A \oplus u_B)^\dagger\big) = \eta\big(P_{S_A}, P_{u_A T_A u_A^\dagger}\big) = \sqrt{2} \eta(S_A, u_A T_A u_A^\dagger) \leq \sqrt{2}\,\eps.
    \end{align}
\end{proof}

\begin{corollary}
    Let $\omega$ be a pure Gaussian state on $\CAR(\k_A\oplus \k_B)$ and $\psi,\varphi$ be pure Gaussian states on $\CAR(\h_A\oplus \h_B)$, respectively. Suppose there exists a Gaussian unitary operation $\alpha_{u_A}$ on $\CAR(\h_A\oplus \k_A)$ such that
    \begin{align}
        \frac{1}{2}\norm{(\varphi_A\ox \omega_A)\circ \alpha_{u_A} - \psi_A\ox \omega_A } \leq \eps \leq \frac{1}{2}, 
    \end{align}
    where $\varphi_A$ denotes the marginal on $\CAR(\h_A)$ (similarly for $\psi,\omega$). 
    Then there exists a Gaussian unitary operation $\alpha_{u_B}$ on $\CAR(\h_B\oplus \k_B)$ such that
    \begin{align}
         \frac{1}{2}\norm{(\varphi\ox \omega)\circ \alpha_{u_A}\circ \alpha_{u_B} - \psi \ox \omega } \leq 4 \eps^{1/2},
    \end{align}
    where we we identify $u_A$ with $u_A\oplus 1$ (and similarly for $u_B$). 
\end{corollary}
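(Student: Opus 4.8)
The plan is to combine the reduction-to-monopartite Lemma (\cref{lemma:mono-to-bipartite-quasifree}) with the two-sided distance bound of \cref{app:prop:cov-bound} (and its companion \cref{lemma:cov-bound-trace-distance}), converting back and forth between the norm-distance of states and the quantity $\eta$ on self-dual covariances. First I would translate the hypothesis into covariance language: write $S,S',T$ for the self-dual covariances of $\psi$, $\varphi$, and $\omega$ on the relevant combined single-particle spaces, and let $S_A,S'_A,T_A$ denote the restrictions giving the marginals. The hypothesis says $\tfrac12\norm{(\varphi_A\ox\omega_A)\circ\alpha_{u_A}-\psi_A\ox\omega_A}\le\eps$. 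By the lower bound in \cref{app:prop:cov-bound}, this forces $1-\exp(-\tfrac14\eta(S_A\oplus T_A,\,u_A(S'_A\oplus T_A)u_A^\dagger)^2)\le\eps$, hence — using $1-e^{-x}\ge x/2$ for $x\in[0,1]$, which applies since $\eps\le\tfrac12$ — we get $\eta\le 2\sqrt{\eps}$ (up to a small constant I would track carefully).

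Next I would feed this $\eta$-bound into \cref{lemma:mono-to-bipartite-quasifree}. The subtlety is that that Lemma is stated for pure states with \emph{faithful} marginals, whereas $\omega,\psi,\varphi$ here need only be pure; but by \cref{lemma:normalform} / \cref{cor:normalform}, the non-faithful part of each marginal factors off as a pure tensor factor that is untouched by the optimal local Bogoliubov transformations, so one may harmlessly restrict to the faithful parts $\r_A,\r_B$ and apply the Lemma there. This produces a Bogoliubov transformation $\alpha_{u_B}$ on $\CAR(\h_B\oplus\k_B)$ such that $\eta\big(S\oplus T,\,(u_A\oplus u_B)(S'\oplus T)(u_A\oplus u_B)^\dagger\big)\le\sqrt2\cdot 2\sqrt\eps = 2\sqrt{2\eps}$. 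Then the upper bound of \cref{app:prop:cov-bound} gives $\tfrac12\norm{(\varphi\ox\omega)\circ\alpha_{u_A}\circ\alpha_{u_B}-\psi\ox\omega}\le\eta\le 2\sqrt{2\eps}$. Finally I would check that $2\sqrt2\,\eps^{1/2}\le 4\eps^{1/2}$ (it is, since $2\sqrt2<4$), absorbing any constant slack from the $1-e^{-x}$ estimate into the stated factor $4$; if the constant turns out slightly worse I would instead use the cruder $\eta^2\le 2\norm{\cdot}_1$ bound from \cref{lemma:cov-bound-trace-distance} on the input side, which also gives $\eta\le 2\sqrt\eps$ directly from $\tfrac12\norm{\cdot}\le\eps$.

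The main obstacle I anticipate is bookkeeping rather than a deep difficulty: keeping the additive tensor factors $\k_A,\k_B$ (the extraction system) straight through the self-dual doubling, making sure the bipartition $\h_A\oplus\k_A$ versus $\h_B\oplus\k_B$ is compatible with the conjugation $J$ so that $C=C_A\oplus C_B$ and \cref{cor:normalform} applies, and correctly matching the identification ``$u_A$ with $u_A\oplus 1$'' so that $\alpha_{u_A}$ and $\alpha_{u_B}$ act on disjoint tensor factors and hence commute. A secondary point to handle cleanly is the faithfulness reduction: one should verify that the vacuum-extraction state $\varphi$ (which is pure and typically has a non-faithful marginal, e.g. a Fock state on $\k$) is covered — this is exactly where peeling off the $\n_{A/B}$ summands of \cref{lemma:normalform} is needed, and one must note that the local Bogoliubov transformations constructed in \cref{lemma:mono-to-bipartite-quasifree} act as the identity on those summands, so the global inequality is unaffected.
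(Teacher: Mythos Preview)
Your proposal is correct and follows essentially the same route as the paper: invert the lower bound in \cref{app:prop:cov-bound} to get an $\eta$-bound on the $A$-marginals, apply \cref{lemma:mono-to-bipartite-quasifree} to manufacture $u_B$, and then use the upper bound in \cref{app:prop:cov-bound}. Two small corrections: from $1-e^{-x}\ge x/2$ with $x=\eta^2/4$ you actually get $\eta\le 2\sqrt{2}\,\eps^{1/2}$ (not $2\sqrt{\eps}$), so after the $\sqrt2$ from the Lemma you land exactly at $4\eps^{1/2}$ rather than $2\sqrt2\,\eps^{1/2}$; and your fallback via \cref{lemma:cov-bound-trace-distance} does not work as written, since that inequality bounds $\eta$ by the trace norm of the \emph{covariance} difference $\norm{S-T}_1$, not by the norm distance of the states.
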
 
\begin{proof}
    Denote by $R,S,T$ the self-dual covariances of $\omega,\psi,\varphi$, respectively.
    From the lower bound in \cref{prop:cov-bound} we find that 
    $\eta( R_A\oplus S_A, u_A(R_A\oplus T_A) u_A^\dagger)^2 \leq -4 \log(1-\eps)$. By \cref{lemma:mono-to-bipartite-quasifree} there exists a Bogoliubov transformation $u_B$
    such that
    $\eta(R\oplus S, (u_A\oplus u_B) R\oplus T(u_A\oplus u_B))  \leq \sqrt{2}(- \, 4 \log(1-\eps))^{1/2} \leq 4 \log(2)^{1/2} \eps^{1/2} \leq 4\eps^{1/2}$, where we used that $-\log(1-\eps)\leq 2\log(2)\eps$ for $0\leq \eps\leq 1/2$.  The result now follows from the upper bound in \cref{prop:cov-bound}.
\end{proof}
\begin{remark}
    In the case of passive Gaussian states, a similar statement of course holds for passive Gaussian operations using the normal form discussed in \cref{remark:normalform-passive}. 
\end{remark}

\subsection{Reduction to the passive case}
In the previous subsection, we have shown that the problem of embezzlement of entanglement can be reduced to the monopartite case. We now argue that the general case can be reduced to the passive case. 
Indeed, using the unitary freedom in the definition of embezzlement, we can directly deduce that the monopartite embezzlement problem only depends on the spectra of the self-dual covariances involved. These are in one-to-one correspondence with those of passive Gaussian states, since we can turn every Gaussian state into a passive one using a Bogoliubov transformation (which may be undone at the end of the protocol).
We have thereby reduced the general problem of embezzlement to that of embezzling monopartite, passive Gaussian states as treated in the main text. 

\section{Inadequacy of the trace-distance of covariances}

In this section, we show the inequality
\begin{align}
    \inf_{u}\, \norm{u(F\oplus K)u^\dagger - G\oplus K}_1 \geq \inf_{u}\, \norm{uFu^\dagger- G}_1,
\end{align}
for arbitrary (finite-dimensional) Hermitian matrices $F,G,K$, and where the optimization runs over all unitary matrices $u$. 
We make use of the fact that, for every pair of Hermitian $n\times n$ matrices $A,B$, we have
\begin{align}\label{eq:distance-unitary-orbits}
    \inf_{u}\,\norm{uAu^\dagger - B}_1 = \norm{\lambda(A)^\downarrow -\lambda(B)^\downarrow}_1,
\end{align}
\Cref{eq:distance-unitary-orbits} follows from Wielandt's theorem and holds for any Schatten $p$-norm (see also \cite{hiai1989distance} for a generalization to von Neumann algebras).
To simplify notation, we will write $f=\lambda(F)^\downarrow$, $g=\lambda(G)^\downarrow \in \RR^d$, $k=\lambda(K)^\downarrow\in \RR^m$.
By the above result, we need to show that
\begin{align}
  \norm{(g\oplus k)^\downarrow - (f\oplus k)^\downarrow}_1\geq   \norm{g - f}_1.
\end{align}
Our prove will work by induction on the dimension $m$ of the vector $k$. 
We begin with $m = 1$, i.e. $k=(k_1)$.
For a subset $X\in\NN$, let $g_{X}$ denote the sub-vector containing the entries $g_j$ with $j\in X$. 
Then $(g\oplus k)^\downarrow = g_{[1,j]} \oplus k \oplus g_{(j,d]}$ and $(f\oplus k)^\downarrow = f_{[1,l]} \oplus k\oplus f_{(l,d]}$ for some $j,l\in \NN$.
Without loss of generality assume $l\geq j$, so that $k_1\geq g_i$ for $i\geq j+1$ and $f_i \geq k_1$ for $i\leq m$.
Then 
\begin{align}
    \norm{(g\oplus k)^\downarrow - (f\oplus k)^\downarrow}_1 &= \norm{g_{[1,j]}-  f_{[1,j]}}_1 + \norm{k\oplus g_{[j+1,m]}- f_{[j+1,m]} \oplus k}_1 + \norm{g_{[m+1,d]}-f_{[m+1,d]}}_1.
\end{align}
We have
\begin{align}
    \norm{k\oplus g_{[j+1,m]} - f_{[j+1,m]} \oplus k}_1 &= \sum_i \big|(k\oplus g_{[j+1,m]})_i - (f_{[j+1,m]} \oplus k)_i \big| \\
    &= (f_{j+1} - k_1) + \sum_{i=j+2}^m(f_i - g_{i-1}) + (k_1 - g_{m}) \\
    &= \sum_{i=j+1}^m(f_i - g_i) = \norm{g_{[j+1,m]} - f_{[j+1,m]}}_1.
\end{align}
Thus $\norm{(g\oplus k)^\downarrow - (f\oplus k)^\downarrow}_1 = \norm{g-f}_1$.
Now, suppose that we have shown the statement for all dimensions of $k$ up to $m$. If $k\in \RR^{m+1}$ we write $k=k_{[1,m]} \oplus k_{\{m+1\}}$. 
Using the short-hands $\tilde g=(g\oplus k_{[1,m]})^\downarrow$ and $\tilde f= (f\oplus k_{[1,m]})^\downarrow$, we find
\begin{align}
    \norm{(g\oplus k)^\downarrow - (f\oplus k)^\downarrow}_1 &= \norm{(\tilde g \oplus k_{\{m+1\}})^\downarrow - (\tilde f\oplus k_{\{m+1\}})^\downarrow}_1 \\
    &= \norm{\tilde g - \tilde f}_1 = \norm{g-f}_1,
\end{align}
finishing the proof.

\section{Entropy and density of spectrum for Gaussian states}

We are interested in the relation between the spectral density of the covariance matrix and the von Neumann entropy of a Gaussian state.
We can assume without loss of generality that the state in question is passive; otherwise, there will be an (active) Gaussian unitary that makes it passive and does not affect the entropy.

We consider a fermionic system with finitely many modes.  Let $\rho_K$ be a passive Gaussian state with covariance $K$.
If $\lambda_i(K)$ denote the eigenvalues of $\rho_K$, a simple calculation shows 
\begin{align}
    H(\rho_K) = \sum_j h(\lambda_j(K)),
\end{align}
where $h(x) = -x\log(x) - (1-x)\log(1-x)$ is the binary entropy. 
If $K$ has $\eps$-dense spectrum, for any $0<\delta < 1/2$, the number of spectral values in the range $[\delta,1-\delta]$ is lower-bounded by $n_{\eps,\delta} = \lfloor(1-2\delta)/\eps\rfloor$. Since the binary entropy is concave, we find that $h(x) \geq 2 x \log(2)$ for any $x\in [0,1/2]$. We thus find
\begin{align}
H(\rho_K) \geq n_{\eps,\delta} 2\delta \log(2).    
\end{align}
Choosing $\delta=1/4$ we find
\begin{align}
    H(\rho_K) \geq \left\lfloor \frac{1}{4\eps}\right\rfloor \log(2).
\end{align}

\section{Proof of \cref{lemma:PS-trick}}
    By a slight extension of the proof of \cite[Lemma 4.2]{powers1970free_states}, if $\delta \1 \leq A$ we have
    \begin{align}
        2^{x}\delta^{1/2} \norm{A^{1/2} - B^{1/2}}_2 \leq \norm{A-B}_2,
    \end{align}
    where $x=1$ if also $\delta \1\leq B$ and $x=0$ otherwise.
    Since $A\leq (1-\delta)\1$  can be rewritten as $\1-A\geq \delta \1$, we also get
    \begin{align}
       2^x\delta^{1/2} &\norm{(\1-A)^{1/2}-(\1-B)^{1/2}}_2 \leq \norm{(\1-A)-(\1-B)}_2 \nonumber = \norm{A-B}_2,
    \end{align}
    where again $x=1$ if also $B\leq (1-\delta)\1$ and $x=0$ otherwise.
    We thus find
    \begin{align}
        \norm{(\1-A)^{1/2}B^{1/2} - A^{1/2}(\1-B)^{1/2}}_2 
        &\leq \norm{(\1-A)^{1/2}(B^{1/2} - A^{1/2})}_2  + \norm{A^{1/2}\big((\1-A)^{1/2}-(\1-B)^{1/2}\big)}_2 \nonumber \\
        &\leq \norm{(\1-A)^{1/2}}\norm{B^{1/2}-A^{1/2}}_2 + \norm{A^{1/2}}\norm{(\1-A)^{1/2} - (\1-B)^{1/2}}_2 \nonumber\\
        &\leq \frac{2^{1-x}}{\delta^{1/2}}\norm{A-B}_2,
    \end{align}
    where $x=1$ if $\delta\1 \leq B\leq (1-\delta)\1$ and $x=0$ otherwise and  where we used H\"older's inequality for the last inequality.

\section{Proof of \Cref{lemma:list-sort}}

In this section, we show \cref{lemma:list-sort} of the main text.
We restate it for completeness:
\begin{lemma}\label{lemma} 
	Let $k\in \RR^n, g,f\in \RR^d$ be non-increasingly ordered vectors (e.g. $k_1 \geq k_2\geq\cdots$), such that $k_1 \geq g_1,f_1$ and $k_m \leq g_d,f_d$. Then
	$$	\norm{(k\oplus f)^\downarrow - (k\oplus g)^\downarrow}_\infty \leq \max_{|i-j| \leq d} |k_i - k_j| \leq d\,\max_i\, (k_i - k_{i+1})$$
\end{lemma}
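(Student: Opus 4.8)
The plan is to understand exactly how the merged sorted vectors $(k\oplus f)^\downarrow$ and $(k\oplus g)^\downarrow$ differ entry by entry. Both are vectors in $\RR^{n+d}$. Because $k_1\geq g_1,f_1$ and $k_n\leq g_d,f_d$, every entry of $f$ and of $g$ is ``sandwiched'' between the largest and smallest entries of $k$; this is the crucial structural fact that makes the interleaving well-behaved. First I would fix a position $p\in\{1,\dots,n+d\}$ and identify which original entry sits in slot $p$ of each merged vector. The key observation is that in slot $p$ of $(k\oplus f)^\downarrow$ one finds either some $k_i$ or some $f_j$, and in either case the indices involved are controlled: if slot $p$ of the $f$-merge holds $k_i$ and slot $p$ of the $g$-merge holds $k_{i'}$, then the difference is $|k_i-k_{i'}|$; if one holds $k_i$ and the other holds $g_j$ (or $f_j$), then since $g_j,f_j$ themselves lie between two consecutive or nearby $k$-values, the difference is again bounded by a gap of the form $|k_i-k_{i'}|$ with $|i-i'|\leq d$.

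Concretely, the counting argument I would run is: let $a(p)$ be the number of $k$-entries among the top $p$ entries of $(k\oplus f)^\downarrow$, and $b(p)$ the number among the top $p$ of $(k\oplus g)^\downarrow$. Then $0\leq p-a(p)\leq d$ and $0\leq p-b(p)\leq d$ (the non-$k$ entries number at most $d$), hence $|a(p)-b(p)|\leq d$. The entry in slot $p$ of the $f$-merge is, by the sandwich hypothesis, squeezed between $k_{a(p)}$ and $k_{a(p)+1}$ (interpreting $k_0=+\infty$, $k_{n+1}=-\infty$ only in degenerate edge cases, which the hypotheses actually rule out), and similarly slot $p$ of the $g$-merge lies between $k_{b(p)}$ and $k_{b(p)+1}$. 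Therefore both entries lie in the interval $[\,k_{\max(a(p),b(p))+1},\,k_{\min(a(p),b(p))}\,]$, whose length is $k_{\min(a,b)}-k_{\max(a,b)+1}=|k_i-k_j|$ for indices with $|i-j|=|a(p)-b(p)|+1\leq d+1$; a slightly more careful bookkeeping (noting that when both slots hold a $k$-entry they hold $k_{a(p)}$ and $k_{b(p)}$ with $|a(p)-b(p)|\leq d$, and when they hold non-$k$ entries those entries are equal, being the $(p-a(p))$-th largest of $f$ versus $g$ only when $a=b$) tightens this to $|i-j|\leq d$. Taking the max over $p$ gives the first inequality $\|(k\oplus f)^\downarrow-(k\oplus g)^\downarrow\|_\infty\leq \max_{|i-j|\leq d}|k_i-k_j|$. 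The second inequality is immediate: $|k_i-k_j|=\sum_{\ell=i}^{j-1}(k_\ell-k_{\ell+1})\leq |i-j|\cdot\max_\ell(k_\ell-k_{\ell+1})\leq d\max_\ell(k_\ell-k_{\ell+1})$ using monotonicity (all summands nonnegative).

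The main obstacle is the careful case analysis pinning down the entry in slot $p$ of each merged vector and showing the relevant $k$-index gap is at most $d$ rather than $d+1$; the edge cases where a slot is occupied by a non-$k$ entry near the top or bottom need the hypotheses $k_1\geq f_1,g_1$ and $k_n\leq f_d,g_d$ to avoid comparing an $f$-entry to a ``missing'' $k$-entry, and one must check that when slot $p$ of one merge holds $f_j$ while the other holds $g_{j'}$ one genuinely has $j=j'$ (forced by $a(p)=b(p)$) so those contributions vanish and only the $k$-versus-$k$ and $k$-versus-($f$ or $g$) comparisons, each bounded by a width-$\leq d$ window of $k$, survive. Everything else is routine bookkeeping with sorted vectors.
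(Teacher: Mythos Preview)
Your approach is essentially the paper's: fix a slot $p$, track the number $a(p)$ (resp.\ $b(p)$) of $k$-entries among the top $p$ of each merge, sandwich the slot-$p$ entry between $k_{a(p)+1}$ and $k_{a(p)}$, and split into cases according to whether each slot holds a $k$-entry or not. The paper uses the same counting (writing $m(k>f_i)$, $m(f\ge k_i)$ where you write $a(p)$, $p-a(p)$) and the same four-case analysis. Your observation $|a(p)-b(p)|\le d$ is correct, and the coarse sandwich bound $k_{\min(a,b)}-k_{\max(a,b)+1}$ with index gap $|a-b|+1\le d+1$ is a clean intermediate step the paper does not isolate.

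However, your tightening to $d$ contains a genuine error. You claim that when both slots hold non-$k$ entries ``one genuinely has $j=j'$ (forced by $a(p)=b(p)$) so those contributions vanish.'' Neither assertion holds. With $k=(10,5,0)$, $f=(7,6)$, $g=(4,3)$, slot $3$ holds $f_2=6$ and $g_1=4$, with $a(3)=1\neq 2=b(3)$ and $j\neq j'$. Even when $a(p)=b(p)$ and $j=j'$ (e.g.\ $k=(10,0)$, $f=(5)$, $g=(6)$, slot $2$), the entries $f_j$ and $g_j$ are not equal and the contribution does not vanish. The correct fix is much simpler: in the non-$k$/non-$k$ case both $p-a(p)$ and $p-b(p)$ lie in $\{1,\dots,d\}$, hence $|a(p)-b(p)|\le d-1$, and your own sandwich bound already gives index gap $|a-b|+1\le d$. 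The mixed $k$/non-$k$ cases, which you do not explicitly address, also need a short check (the paper carries out all four cases); they go through because when the $k$-side entry equals $k_{a(p)}$ exactly (the top of its interval) one saves the ``$+1$'' on that side. With this correction your plan and the paper's proof coincide.
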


Before we come to the proof, we note the following consequence:

\begin{corollary}
	Let $k,g,f$ be as in \cref{lemma} and define $k^{\oplus d} = (k\oplus k \oplus \cdots \oplus k)^\downarrow$.
	Then
	\begin{align}
	\norm{(k^{\oplus d} \oplus f)^\downarrow - (k^{\oplus d} \oplus g)^\downarrow}_\infty \leq \max_i\, (k_i - k_{i+1}).
	\end{align}
	\begin{proof}
		$\max_{|i-j| \leq d} |k^{\oplus d}_i - k^{\oplus d}_j| \leq \max_{|i-j|\leq 1} |k_i - k_j| \leq \max_i\, (k_i-k_{i+1})$. 
	\end{proof}
\end{corollary}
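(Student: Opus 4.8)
The plan is to deduce the Corollary directly from \cref{lemma} (in the form restated in this section), applied not to $k$ itself but to its $d$-fold sorted repetition $k^{\oplus d}\in\RR^{nd}$, and then to exploit the block structure that the repetition creates.

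First I would verify that the triple $k^{\oplus d},g,f$ satisfies the hypotheses of \cref{lemma}. By construction $k^{\oplus d}$ is non-increasingly ordered, and its extreme entries coincide with those of $k$, namely the largest is $k_1$ and the smallest is $k_n$. Hence the boundary conditions $(k^{\oplus d})_1 = k_1 \ge g_1,f_1$ and $(k^{\oplus d})_{nd} = k_n \le g_d,f_d$ are inherited verbatim from the assumptions on $k$. Applying \cref{lemma} with $k^{\oplus d}$ in place of $k$ (the window size in the estimate remains the common dimension $d$ of $f$ and $g$, which has not changed) then yields
\begin{align}
	\norm{(k^{\oplus d}\oplus f)^\downarrow - (k^{\oplus d}\oplus g)^\downarrow}_\infty \le \max_{|i-j|\le d} \big|(k^{\oplus d})_i - (k^{\oplus d})_j\big|.\nonumber
\end{align}

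The key step, and the only place where the repetition does real work, is to bound this right-hand side by $\max_i (k_i - k_{i+1})$. The sorted vector $k^{\oplus d}$ consists of $n$ contiguous blocks of length $d$, the $\ell$-th block having all entries equal to $k_\ell$; in symbols $(k^{\oplus d})_i = k_{\lceil i/d\rceil}$. I would then observe that two indices $i,j$ with $|i-j|\le d$ necessarily lie in blocks whose labels differ by at most one: if $i$ sits in block $\ell$ and $j$ in block $\ell'$ with $\ell'\ge \ell+2$, then $j - i \ge \big((\ell+1)d+1\big) - \ell d = d+1$, contradicting $|i-j|\le d$. Consequently every difference appearing on the right is of the form $|k_\ell - k_{\ell'}|$ with $|\ell-\ell'|\le 1$, so $|(k^{\oplus d})_i - (k^{\oplus d})_j| \le \max_\ell |k_\ell - k_{\ell+1}| = \max_\ell (k_\ell - k_{\ell+1})$, the last equality using that $k$ is non-increasing.

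Combining the two displays completes the argument. The substance lies entirely in the block-counting estimate of the third paragraph, which I regard as the main (and essentially only) obstacle — though it is a transparent one. Whereas the bare \cref{lemma} controls oscillations of $k$ only across windows of width $d$ and hence pays the crude factor $d$, padding each value into a block of width $d$ forces any window of $d$ consecutive positions to straddle at most a single block boundary; this is exactly what collapses the controlling quantity from $d\,\max_i(k_i-k_{i+1})$ down to $\max_i(k_i-k_{i+1})$. Beyond pinning down the block indexing and transferring the hypotheses to $k^{\oplus d}$, I expect no genuine difficulty.
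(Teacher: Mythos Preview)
Your proof is correct and is exactly the paper's approach: apply \cref{lemma} with $k^{\oplus d}$ in place of $k$, then use the block structure $(k^{\oplus d})_i = k_{\lceil i/d\rceil}$ to collapse $\max_{|i-j|\le d}|(k^{\oplus d})_i-(k^{\oplus d})_j|$ to $\max_{|\ell-\ell'|\le 1}|k_\ell-k_{\ell'}|$. The paper compresses this into a single line, but your expanded version (checking the hypotheses transfer and spelling out the block-counting) is the same argument with the details filled in.
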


\begin{proof}[Proof of \cref{lemma}]
To simplify notation, we set $a := (k\oplus f)^\downarrow, b := (k\oplus g)^\downarrow$.
To every entry $a_x$ we associate a unique element of $f$ or $k$, indicated by $a_x \ptr f_i$ or $a_x \ptr k_i$ (similarly for $b$).
To deal with degeneracies, we make the following conventions:
\begin{itemize}
	\item If $a_x \ptr f_i, a_y \ptr f_j$ and $a_x=a_y$, with $x < y$, then $i<j$. The same rule holds for $k$-entries. ("$f$'s and $k$'s keep their ordering")
	\item If $a_x \ptr f_i, a_y \ptr k_j$ and $a_x = a_y$ then $x<y$. ("$f$'s go first").
\end{itemize}
We also use the notation $m(k\geq \lambda) = |\{ j : k_j \geq \lambda \}|$ and similarly for other relations and $f,g$.
Since the entries of $k$ are sorted, $m(k>\lambda)$ is simply the largest value of $j$ such that $k_j >\lambda$.
If a value $\lambda$ in $k$ has multiplicity $l$, so that $\lambda = k_{i} = k_{i+1}=\cdots = k_{i+l-1}$, then 
$m(f\geq k_i) = m(f \geq k_{i+1}) =\dots= m(f\geq k_{i+l-1})$.

The following consequences follow from these conventions:
\begin{enumerate}[label=(\alph*)]
	\item If $a_x\ptr f_i$ and $a_y\ptr k_j$ for $y<x$, then $k_j > f_i$. Hence, if $a_x\ptr f_i$ we have $x = i + m(k > f_i)$.
		In particular we have
		\begin{align}
			k_{x-i+1} \leq a_x = f_i < k_{m(k>f_i)} =  k_{x-i},
		\end{align}
		where we set $k_0 := k_1$ and $k_{m+1} := k_m$ (note that $k_0 \geq f_j,g_j$ and $k_{m+1} \leq f_j,g_j$ for all $j$ by our assumptions).
	\item If $a_x \ptr k_i$ and $a_y \ptr f_j$ for $y<x$, then $f_j \geq k_i$. Therefore, $x = i + m(f\geq k_i)$.
\end{enumerate}
	We now choose some $x$ and assume (without loss of generality) that $0\leq a_x- b_x$. We need to show that we can always bound
	\begin{align}
		a_x - b_x \leq k_i - k_j,\quad |i-j| = j-i \leq d.
	\end{align}
	Using the conventions from above, we distinguish different cases, depending on whether $a_x,b_x$ are $k$-valued or not:
	\begin{enumerate}
		\item $a_x \ptr k_i, b_x \ptr k_j$: We have $x = i + m(f\geq k_i) = j + m(g\geq k_j)$ with $j\geq i$. Hence,
			\begin{align}
				|i-j| =	j - i = m(f\geq k_i) - m(g\geq k_j) \leq m(f\geq k_i) \leq d. 
			\end{align}

		\item $a_x \ptr k_i, b_x \ptr g_j$: We have $x = i+m(f\geq k_i)$ and $g_j \geq k_{x-j+1}$. Hence,
			\begin{align}
				a_x - b_x \leq k_i - k_{x-j+1},
			\end{align}
			and (since $j\geq 1$)
			\begin{align}
				|i - (x-j+1)| = (x-j+1)-\big(x - m(f\geq k_i)\big)  \leq m(f\geq k_i) \leq d.
			\end{align}
		\item $a_x \ptr f_i, b_x \ptr k_j$: We have $f_i \leq k_{x-i}$ and $x = j + m(g\geq k_j)$. Hence,
			\begin{align}
				a_x - b_x \leq k_{x-i} - k_{x - m(g\geq k_j)},
			\end{align}
			with
			\begin{align}
				|x-i - (x-m(g\geq k_j))| =  i - m(g\geq k_j) \leq d.
			\end{align}

		\item$a_x \ptr f_i, b_x \ptr g_j$: We have $f_i \leq k_{x-i}$ and $g_j \geq k_{x-j+1}$. Hence,
			\begin{align}
				a_x - b_x \leq k_{x-i} - k_{x-j+1},
			\end{align}
			and (since $j\geq 1$)
			\begin{align}
				|x-i - (x-j+1)| = i - (j-1) \leq d.
			\end{align}
	\end{enumerate}
	This shows the first inequality of \cref{lemma}. The second inequality is immediate from the fact that the entries of $k$ are ordered.
	
\end{proof}
\end{document}